\documentclass[11pt,letterpaper]{article}
\usepackage[paper=letterpaper,margin=1in]{geometry}

\usepackage{graphicx}
\usepackage{amssymb}
\usepackage{boxedminipage,afterpage}
\usepackage{amsfonts,  amsmath,  amsthm, amssymb}
\usepackage{caption,enumitem,array} 
\usepackage{graphicx,rotating}
\usepackage{url,color}
\usepackage{amscd,amsfonts,amsmath,amssymb,pifont}
\usepackage[colorlinks=true,citecolor=blue,urlcolor=black, linkcolor=blue]{hyperref}
\usepackage{setspace}

\usepackage{algorithm}
\usepackage{algpseudocode}

\usepackage{fullpage}

\bibliographystyle{alpha}

\title{Population stability: regulating size in the presence of an adversary
}

\date{}

 \author{
 {Shafi Goldwasser}\thanks{MIT.  Department of Computer Science. \url{shafi@csail.mit.edu}. Supported by NSF MACS CNS-1413920, DARPA IBM W911NF-15-C-0236, and a Simons Investigator award agreement dated 6-5-12.} \and
Rafail Ostrovsky\thanks{University of California, Los Angeles. Department of Computer Science and Mathematics.  
\url{rafail@cs.ucla.edu}.
Research supported in part by NSF grant 1619348, DARPA, US-Israel BSF grant 2012366, OKAWA Foundation
Research Award, IBM Faculty Research Award, Xerox Faculty Research Award, B. John Garrick Foundation Award,
Teradata Research Award, and Lockheed-Martin Corporation Research Award. The views expressed are those of the
authors and do not reflect position of the Department of Defense or the U.S. Government.}  \and
 Alessandra Scafuro\thanks{North Carolina State University.  Department of Computer Science.  \url{ascafur@ncsu.edu}.
}\and
Adam Sealfon \thanks{MIT. Department of Computer Science.  \url{asealfon@csail.mit.edu}.
Supported by a DOE CSGF fellowship, NSF MACS CNS-1413920, DARPA IBM W911NF-15-C-0236, and a Simons Investigator award agreement dated 6-5-12.
 }
 }

\newcommand{\interval}{T} %
\newcommand{\Tinner}{T_{\mathsf{inner}}}
\newcommand{\maxkill}{K}    
\newcommand{\nzero}{N}
\newcommand{\agentmemory}{M}

\newcommand{\N}{\nzero}
\newcommand{\advbound}{K}
\newcommand{\K}{\advbound}
\newcommand{\T}{\interval}

\newcommand{\al}{\alpha}
\newcommand{\halfal}{\frac{\alpha}{2}}
\newcommand{\oma}{(1-\al)}
\newcommand{\opa}{(1+\al)}
\newcommand{\omha}{(1-\halfal)}
\newcommand{\opha}{(1+\halfal)}
\newcommand{\opmha}{(1\pm\halfal)}

\newcommand{\round}{\mathsf{round}}
\newcommand{\mycolor}{\mathsf{color}} %
\newcommand{\torecruit}{\mathsf{to\_recruit}}
 
\newcommand{\amActive}{\mathsf{active}} 
\newcommand{\recruiting}{\mathsf{recruiting}}

\newcommand{\NbrIsER}{\mathsf{Nbr}.\SelfIsER} %
\newcommand{\NbrRec}{\mathsf{Nbr}.\recruiting}
\newcommand{\NbrColor}{\mathsf{Nbr}.\mycolor}
\newcommand{\NbrActive}{\mathsf{Nbr}.\amActive}
\newcommand{\SelfIsER}{\mathsf{inEvalPhase}} %

\newcommand{\MyStatus}{\mathsf{MyStatus}}
\newcommand{\NbrStatus}{\mathsf{Nbr}}
\newcommand{\Nbr}{\NbrStatus}

\newcommand{\Communicate}{Communicate}

\newcommand{\BiasedCoin}{TossBiasedCoin}
\newcommand{\ElectLeader}{DetermineIfLeader}
\newcommand{\CheckRoundConsistency}{CheckRoundConsistency}
\newcommand{\RecruitmentPhase}{RecruitmentPhase}
\newcommand{\EvaluationPhase}{EvaluationPhase}

\newcommand{\ExchangeMessages}{ExchangeMessages}

\newcommand{\rgets}{\stackrel{\$}{\gets}}
\newcommand{\binset}{\{0,1\}}
\newcommand{\splt}{\mathsf{split}}

\newcommand{\remove}[1]{}

\newtheorem{theorem}{Theorem}
\newtheorem{lemma}[theorem]{Lemma}

\newcommand{\negl}{\nu}
\newcommand{\polylog}{\mathrm{polylog}}
\newcommand{\poly}{\mathrm{poly}}

\begin{document}

\clearpage
\maketitle
\thispagestyle{empty}

\begin{abstract}

We introduce a new coordination problem in distributed computing that we call the {\bf population stability problem}.
A system of agents each with limited memory and communication, as well as the ability to 
replicate and self-destruct, is subjected to attacks by a worst-case adversary that can at a bounded rate 
(1) delete agents chosen arbitrarily and (2) insert additional agents with arbitrary initial state into the system. 
The goal is perpetually to maintain a population whose size is within a constant factor of the target size $N$.
The problem is inspired by the ability of complex biological systems composed of 
a multitude of memory-limited individual cells to maintain a stable population size in an adverse environment. 
Such biological mechanisms allow organisms to heal 
after trauma or to recover from excessive cell proliferation caused by inflammation, disease, or normal development.

We present a population stability protocol in a communication model that is a synchronous variant of the population model of Angluin et al. In each round, pairs of agents selected at random
meet and exchange messages, where at least a constant fraction of agents is matched in each round. 
Our protocol uses three-bit messages and $\omega(\log^2 \N)$ states per agent.
We emphasize that our protocol can handle an adversary that can both insert and delete agents, a setting
in which existing approximate counting techniques do not seem to apply.
The protocol relies on a novel coloring strategy in which 
the population size is encoded in the {\em variance} of the distribution of colors.  
Individual agents can locally obtain a weak estimate of the population
size by sampling from the distribution,
and make individual decisions that robustly maintain a stable global population size.

\end{abstract}
\vspace{50pt}
\setcounter{page}{0}
\pagebreak

\section{Introduction}

A single fertilized mouse egg and human egg 
develop into organisms with vastly different numbers of cells. 
How do cellular mechanisms
regulate the number of cells in complex biological systems?
In order to function in adverse environments, organisms must maintain %
stability and be able to recover from unplanned circumstances. For example, 
a lizard that loses its tail can grow a new one,
and internal organs 
require mechanisms to recover from cell loss caused by injury or from
execessive cell proliferation due to development or disease.
But how do individual cells know how to respond in order to reestablish the
desired population size?

Regulation of population size may be achieved through a combination of internal programs running within 
each cell and intercellular communication. 
One approach could be for individual cells to 
count the population using a distributed protocol. 
But an interesting question is how to control 
the population size if each cell lacks sufficient memory to count.
Understanding the mechanisms for regulating population size in an adversarial environment, in light of
memory constraints of individual agents, is a natural computational question. %

In this work, we study the problem of robustly maintaining a stable population size 
from the perspective of distributed computing.
We introduce a new coordination question that we call the {\bf population stability problem}.
Consider a population of agents with the ability to replicate and self-destruct.
How can such distributed systems detect and recover from adversarial deletions and 
insertions of agents so as to maintain the desired population size?

Our focus is on systems that consist of huge numbers of agents, where each
agent individually has very limited memory and connective capability and can
directly communicate with only a few other agents in the system.
We model communication using a synchronous variant of the population model 
of  Angluin et al.~\cite{AADFP06,AAE07}. In each round, a constant fraction of agents is matched
at random and can exchange messages, where the matched agents are chosen independently in each round.
Population size must be maintained in this setting in the presence of an adversary that observes
the entire state of the system and can continually delete or insert agents.
We describe the model in more detail below.

The population stability problem augments a growing body of work that uses the language and ideas of distributed
computing to model biological systems
consisting of a collection of resource-constrained components that collectively accomplish
complex tasks.
Naturally, we do not claim direct relevance of our results to biological systems due to potential modeling
differences.
Regardless, the population stability problem makes sense in
any system consisting of individual components with the ability to reproduce.

\subsection{Contributions}
\label{ssec:contributions}
Our main contributions are as follows.

\paragraph{A New Problem in Distributed Computing: The Population Stability Problem.}  We introduce  a new problem in distributed computing. A population of 
$\nzero$ memory-constrained agents (i.e.~processors with the ability to reproduce and self-destruct)  is 
subjected to adversarial attacks. Whereas many attacks can be envisioned, 
we consider a worst-case  adversary that can delete or insert agents  at a bounded rate.
The goal is to maintain a stable population size within a small multiplicative  factor of the original size $\nzero$.
This problem appears fundamentally different from the classical problems of distributed computing, 
such as consensus, leader election, majority, common coin flipping,  or computing
general functions of the joint state of the parties.

\paragraph{Models for Communication and the Adversary.}
The communication model we consider is a synchronous variant of the population model of ~\cite{AADFP06,AAE07}. That model  was designed to represent sensor networks consisting of very limited mobile agents with no control over their own movement and whose goal is to compute some function of their inputs or evaluate a property of the system. Whereas \cite{AADFP06} assumed 
that pairs of agents can communicate via pairwise interactions as scheduled by a uniformly random matching process,
we assume in addition that agents are synchronized and interact with one another in rounds.
Within each round, at least a $\gamma$ fraction of agents participates in pairwise interactions, again scheduled
uniformly at random. The agents 
additionally have the ability to self-destruct and to reproduce by producing a second identical copy of themselves.

It is clear that we cannot allow the adversary to delete most or all of the 
agents in a single round, since maintaining a stable population size in the presence of such an adversary is impossible. 
Consequently, we give the adversary a budget of $\maxkill$ alterations to perform in each round,
where an alteration consists of removing, inserting or modifying the memory of a single agent. 
We allow the adversary to observe the memory contents of \emph{every} agent before determining its alterations
for a round. 
Both $\gamma$ and $\maxkill$ are parameters of the model.
The model is described in detail in Section~\ref{model}.

\paragraph{Protocol for Population Stability.}
We present a  protocol with three-bit messages requiring $\polylog(N)$ states (i.e.~$\Theta(\log \log N)$ bits of memory)
per agent that tolerates $\maxkill = \N^{1/4 - \epsilon}$ worst-case insertions or deletions in each round, 
for any constant $\epsilon > 0$. Formally, our main theorem is the following.

\begin{theorem}
Let $\alpha,\gamma,\epsilon$ be positive constants, where $\gamma$ is a lower bound
on the fraction of agents that is matched in each round.
Then there exists a population stability 
protocol with three-bit messages and $\polylog(N)$ states per agent and
guaranteeing that if the adversary inserts or deletes at most $\maxkill = O(\N^{1/4-\epsilon})$
agents in each round, then with
all but negligible probability the population will remain between $(1-\alpha)\nzero$ and $(1+\alpha)\nzero$ 
for any polynomial number of rounds. 
\label{thm:mainthm}
\end{theorem}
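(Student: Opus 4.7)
The plan is to organize the protocol into repeated phases, each consisting of a \RecruitmentPhase{} followed by an \EvaluationPhase{} of length $\Tinner = \polylog(\N)$ rounds, after which every agent independently decides whether to split, self-destruct, or stay. In the recruitment subphase the agents use the random matching to cooperatively generate a coloring in which each of the $n$ currently alive agents receives a color drawn from a small distribution whose \emph{variance} is a known monotone function $f(n)$ of the current population size. In the evaluation subphase each agent samples the color distribution by exchanging colors over repeated random meetings, and at the end of the phase it converts its samples into a local estimate $\hat{n}$ of $n$. The three-bit message budget carries the sender's color bit plus two flags indicating whether it is recruiting or in evaluation; the $\polylog(\N)$ state budget holds the phase counter, the round counter, and the running estimator.

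First, in the adversary-free setting, I would verify that the chosen coloring yields an empirical statistic concentrated around $f(n)$. A Hoeffding/Chernoff argument over the $\Theta(\Tinner)$ samples held by a single agent gives $\hat{n}\in(1\pm\halfal) n$ with probability $1-\negl(\N)$, and a union bound lifts the guarantee to all agents simultaneously. Next I would specify an individual action rule: a biased coin whose bias depends only on $\hat{n}/\N$, arranged so that the parallel application of the resulting split/die/stay decisions contracts $n$ multiplicatively toward $\N$ without ever leaving the corridor $(1\pm\al)\N$ once inside it. Composing this contraction across a polynomial number of phases, together with a union bound over the failure events of each phase, delivers the long-run guarantee claimed in the theorem.

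The delicate part is the adversary. Since it sees every agent's state and may insert or delete up to $\maxkill$ agents per round, it can simultaneously (i) shift the raw population count by $\maxkill$ each round, (ii) poison the coloring distribution by injecting agents with adversarially chosen colors, and (iii) bias any individual agent's empirical samples by engineering meetings with corrupted partners. Each of these effects can be bounded by $O(\maxkill)$ per round and therefore by $O(\maxkill\cdot\Tinner)$ per phase. A bound of the form $\maxkill = O(\N^{1/4-\epsilon})$ in the hypothesis arises from requiring that this total contamination be much smaller than both the true variance signal and the intrinsic $\Theta(\sqrt{n})$ statistical fluctuation of the estimator, once the $\polylog(\N)$ factors contributed by $\Tinner$ and by the per-sample precision are absorbed.

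The main obstacle, where most of the work goes, is propagating these perturbation bounds through an adaptive adversary and chaining the good event at the end of one phase into the initial conditions of the next without accumulating error across the polynomially many phases required by the theorem. I would handle this by using the concentration of $\hat{n}$ at the end of each phase to bound $n$ at the start of the next, combined with a martingale-drift argument showing that conditional on the population lying in a slightly wider corridor than $(1\pm\al)\N$, the next phase's action rule returns it to $(1\pm\al)\N$ with all but negligible probability per phase.
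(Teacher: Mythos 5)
There is a genuine gap at the center of your argument: the claim that a single agent, using the $\Theta(\Tinner)=\polylog(\N)$ color samples it can collect in one phase, obtains an estimate $\hat{n}\in(1\pm\halfal)\,n$ with probability $1-\negl(\N)$. For any coloring that is robust against an adversary who reads all states and can insert or delete $\maxkill=\N^{1/4-\epsilon}$ agents per round, the size information must be spread over polynomially many independently colored groups; a constant number of ``special'' color sources would simply be deleted or duplicated by the adversary, exactly as in the paper's failed Attempt~1. With the robust choice of $\Theta(\sqrt{\N})$ clusters of size $\sqrt{\N}$ (which is what the paper uses), the statistic an agent can observe is the probability that two random agents share a color, namely $\tfrac12+\Theta(\sqrt{\N}/m)$, so the dependence on the current population $m$ has magnitude $\Theta(\N^{-1/2})$. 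Hoeffding over $\polylog(\N)$ samples resolves this probability only to $\pm 1/\polylog(\N)$, so no individual agent can distinguish $m=(1-\halfal)\N$ from $m=\N$ at all, let alone with all-but-negligible probability; a constant-factor per-agent estimate of $n$ would need $\Omega(\N)$ samples. Hence the subsequent steps of your plan (an action rule driven by $\hat{n}/\N$, a per-phase multiplicative contraction once every agent knows $\hat{n}$, and a union bound over agents) do not get off the ground, and your derivation of the bound $\maxkill=O(\N^{1/4-\epsilon})$ from ``contamination versus per-agent estimator precision'' is not the correct accounting.

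The paper's proof is built precisely to avoid per-agent estimation: each agent makes a \emph{single} comparison per epoch, splitting with probability $1-16/\sqrt{\N}$ on a color match and dying on a mismatch, so each local decision is nearly uninformative and the correction appears only in the aggregate of $\Theta(\N)$ matched pairs. Concretely the analysis needs (i) bookkeeping lemmas showing that almost all agents agree on the round number despite adversarial insertions (via the evaluation-phase consistency check) and that every unperturbed leader completes a cluster of exactly $\sqrt{\N}$ recruits; (ii) a bounded-deviation lemma: the number of agents of each color at the evaluation phase is $m/16\pm O(\N^{3/4-\epsilon})$, because each of the $\K\cdot\T$ adversarial insertions/deletions influences at most one cluster of size $\sqrt{\N}$ --- this is where the exponent $1/4-\epsilon$ really enters, by requiring the adversary's $\tilde{O}(\N^{3/4-\epsilon})$ influence to be dominated by the $\Theta(\N^{3/4})$ sampling fluctuation of the leader coin flips --- so the population moves by only $\tilde{O}(\sqrt{\N})$ per epoch; and (iii) a drift lemma: when the population lies outside $[\omha\N,\opha\N]$, the expected change over one epoch is $\Omega(\sqrt{\N})$ toward $\N$, proved by deferring adversarial deletions of colored agents to the evaluation phase so that honest clusters have independent colors. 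A Chernoff--Hoeffding bound over $\N^{0.01}$ epochs then returns the population to the inner interval before it can leave $[\oma\N,\opa\N]$. Your closing martingale-drift sketch is in this spirit, but without the leader/cluster recruitment construction and with the signal treated per agent rather than in aggregate, the argument as written fails at its first quantitative step.
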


\paragraph{New Techniques.}
The main idea employed in our construction is to \emph{color} the agents with the values $0$ and $1$
in such a way that information about the population size is encoded in the distribution of colors.
That is, given a set of agents with assigned colors in $\binset$, 
consider the distribution specified by choosing an 
agent at random and observing its color. We are able to assign colors to agents
in such a way that approximate population size is encoded in the \emph{variance} of this distribution. 
Subsequently, each individual agent locally computes a very weak estimator of whether the variance 
is too large or too small, and makes a local decision of whether to reproduce or self-destruct. 
Although each individual agent's estimate is noisy, we show that in the aggregate, the local decisions  
are globally able to maintain a stable population size even in the presence of a powerful adversary.

We discuss the model and results further in Section~\ref{ssec:discuss}. 
We provide a more in-depth overview of our techniques in Section~\ref{ssec:tech-overview} below.
The model is described in further detail in Section~\ref{model}. 
In Section~\ref{sec:protocol} we provide a full description of the protocol,
and in Section~\ref{sec:analysis} we present the analysis. 

\subsection{Discussion and extensions}
\label{ssec:discuss}
\paragraph{Challenges.}
The first obstacle we must confront in designing a protocol for population stability is the memory constraint. 
Each agent does not have sufficient memory to store a unique identifier or to count to the population target $\N$. 
Yet collectively, they must have a good approximation of 
$\nzero$ and must individually decide whether to replicate if the population is too low or to self-destruct if the population is too high.
Making their task even more challenging, these memory-constrained agents must 
correct deviations in the size of the population and make their decisions {\em while} the adversary is acting.

A second major challenge is that the adversary is very powerful. Although the number of agents
inserted or deleted in each round is bounded, the adversary can observe the complete state of the
system\footnote{That is, the adversary has the ability to read the memory contents of every agent.}
before deciding on its actions, and may insert agents of arbitrary initial state. 
These capabilities present difficulties for many standard techniques and abstractions
used in distributed computing. For instance, many protocols are based on leader election, where a single
leader processor is chosen to direct or facilitate the task at hand. However, since our adversary is
able to observe the state of every agent, the adversary can simply wait for a leader to be chosen and then
delete it. Furthermore, even without observing the internal memory of agents, 
the adversary could insert many additional agents that are all identical in state
to the leader. Indeed, since agents %
can be in one of only $\polylog(\N)$
distinct states and the adversary is able to insert $\K=\N^{1/4-\epsilon}\in\poly(\N)$ agents per round,
the adversary can even insert many  copies of \emph{every possible} agent type in each round. 
Consequently any approach that relies on the existence of agents of unique or extremely special state,
such as leader election, seems doomed to failure.
This appears to render ineffectual a large part of the distributed computing toolbox. 

\paragraph{Adversarial insertions.} Recall that we allow the adversary to insert new agents with arbitrary initial state.
Starting from that internal state, we assume that the inserted agents execute the same protocol as honest agents.
We could instead consider an even stronger adversary that inserts agents 
running arbitrary malicious protocols specifying their subsequent behavior. 
However, the population stability problem as described above is clearly impossible in the face of this stronger
adversary, since our model does not include the ability to destroy agents that do not cooperate. 
A malicious agent can simply ignore all interactions with other agents and replicate itself at every opportunity.
Such malicious agents would quickly replicate themseves out of control, rapidly exceeding the population target.

One may consider a different model that allows agents not only to self-destruct but also to remove other agents
it encounters.
In such a setting, 
our protocol can be extended to achieve population stability even if the adversary 
is allowed to insert agents that execute arbitrary malicious programs,
as long as there is a bound on how frequently malicious agents can replicate and 
an agent is able to detect when it encounters an agent whose program is different from its 
own. %
However, that setting is not the focus of this work.

\paragraph{Correction and detection.} The objective in the population stability problem is to maintain
a population size that is close to a target value $\N$, and to correct the population if it deviates too far from this target.
A closely related problem is that of simply \emph{detecting} whether the population has deviated too far from the target
or whether it has exceeded some threshold, objectives that seem very similar to the problem of
approximate counting~\cite{Mor78}. 
It would be natural to try to solve our problem by first detecting whether
the population is too large or too small and then correcting appropriately. 
With a weaker adversary that can only delete agents but not insert additional ones (and additionally is oblivious
to the internal states and coin flips of agents), this approach can be made to work using
approximate counting techniques. 
In the adversarial model considered here, 
constructing approximate counters and detecting changes in population size 
are interesting open questions.

\paragraph{Synchrony.} 
In this work we study a synchronous model, where all agents 
communicate and perform updates in rounds.
As has commonly been the case across distributed computing, it is natural to study a new problem first 
in a synchronous setting to distill key ideas and techniques before adding additional complications in an
asynchronous setting. 

We note, however, that synchronization is far from an unreasonable assumption in biological systems.
Indeed, many multicellular systems do achieve synchrony either through regular external stimuli such as sunlight 
or through chemical control mechanisms.
For instance, heart cells maintained in culture were able to achieve
a high degree of synchronization of their rhythmic contractions~\cite{JMT87cardiology}. Neuronal cells
too exhibit highly synchronized behavior. Even when grown in culture, specialized neuronal cells
show the capacity to synchronize the release of particular hormones at regular time intervals~\cite{EMCW92gnrh}.
Bacterial populations have also been shown to have the capability of producing coordinated oscillations
\cite{MO'S09molecularclock,DMTH10bacteria}.

Nonetheless, an extremely natural and interesting question is how to solve the population stability problem
in a setting without synchrony or with only partial synchrony. For instance, one could consider a setting where
agents have clocks that have bounded drift relative to one another. Related to this is the typical 
random scheduler setting of population protocols~\cite{AAE07}, in which a single pair of agents at a time is chosen to interact
and update state. By a concentration argument, this process allows agents to maintain clocks that do not drift
too quickly relative to one another. 

While the construction in this paper requires synchrony, there are some known techniques in the 
population protocol literature for maintaining approximate synchronization
in a non-synchronous setting; see, for instance, the recent work of~\cite{AAG18}. 
A natural extension is to show whether our techniques can be combined with synchronizers to achieve
population stability in such settings.

\paragraph{Alternate communication models.}
Another very interesting question is to explore 
the population stability problem under a different communication model. 
In this paper, pairs of agents that communicate are chosen independently at random in each round.
Alternatively, one could consider settings in which the neighbors
of an agent are consistent over time, perhaps reflecting underlying geometric constraints.

One natural approach is to use a fixed sparse communication graph, for instance an expander. However,
modeling problems arise in determining how connectivity changes upon agent replication, insertion, 
or deletion. In various settings along these lines, 
it is straightforward for the adversary to disconnect the communication
graph and consequently to violate population stability. 
An alternate approach could be to associate agents with points in $\mathbb{R}^d$, and to 
allow each agent to communicate with a small number of the nearest other agents.

\paragraph{Population stability in the high-memory setting.} 
We note that in the absence of memory constraints, there is a trivial protocol both for approximate counting and
for the population stability problem if the adversary can only delete and not insert new agents. 
Each agent simply flips $\N$ coins to generate
a unique identifier $\mathsf{id}\in\binset^\N$. For an interval of $O(\poly\log(\N))$ rounds each agent broadcasts
the set of identifiers it has received so far. 
With high probability, all identifiers are unique and agents receive the identifiers of every agent that was alive
throughout the interval, so each agent learns a close approximation of the population size and
can make a decision of whether to self-destruct, replicate, or neither. 
However, this protocol relies heavily on agents having very large memory, and does not 
yield an approach to solve the problem in the low-memory setting.

\paragraph{A note about success probability.}
Theorem~\ref{thm:mainthm} states that our protocol maintains a stable population for any polynomial rounds
with all but negligible probability. Recall that a function is \emph{negligible} if it tends to zero
faster than any polynomial. That is, a function $\negl(x)$ is negligible if for any $c\in \mathbb{N}$
there exists an $x_0$ such that $\left|\negl(x)\right| < 1/x^c$ for all sufficiently large $x \geq x_0$. 

Throughout this paper, we use the phrases ``with high probability'' and ``with overwhelming probability''
to mean with probability $1-\negl(\N)$ for some negligible function $\negl$.

\subsection{Technical overview}
\label{ssec:tech-overview}

\subsubsection{Preliminary attempts}

We first describe two preliminary attempts at protocols for the population stability problem which, while unsound,
will provide useful intuition toward the design of our actual protocol.

\paragraph{Attempt 1: non-interactive leader election}
As a first attempt in the low-memory setting,  consider the following approach, which is 
based on an idea from~\cite{AAEGR17}. 
Each agent flips a biased coin where $\Pr[c=1] = \frac{1}{\N}$, where outcome $1$ means that the agent is a leader. 
For $O(\polylog(\N))$
rounds, each agent sends any agent it encounters the bit $0$ if its coin was zero  and it has not received
the message $1$, and the bit $1$ if its coin was $1$ or it has received a $1$ from another agent.
In the absence of an adversary, this allows every agent to learn whether a $1$ was obtained in any
of the initial coin flips. The probability of this event differs noticeably depending on whether the population
is too small or too large. After repeating to amplify the signal, with high probability the agents can detect if
the population is too small or too large and can replicate or self-destruct accordingly.
A protocol of this form can be shown to work in the presence of an adversary that can only delete and 
not insert agents, and is additionally oblivious to the coin flips made by the agents.
However, in the adversary model considered here with insertion as well as deletion 
and full knowledge of the states of agents, the protocol will fail. The adversary can either insert an agent with coin value
$c=1$ in each phase, or else identify the agent or agents with coin value $1$ and selectively remove these agents.
Consequently the adversary can cause the population to grow or shrink arbitrarily. 

This attempt highlights a fundamental difficulty in designing protocols in our adversarial model. 
The protocol relied on a non-interactive strategy related to leader election, where 
the presence or absence of a leader could be used to infer the approximate size of the population.
However, as we have discussed above, 
the use of a special state with one or only a few agents of that state
(in this case agents with coin value $c=1$) provides the adversary with an easy avenue of attack, namely
the deletion of agents of that state or the insertion of many additional agents of that state. Consequently,
constructions of this flavor seem to have little promise in this adversarial setting. 

\paragraph{Attempt 2: independent coloring}
As a next attempt, consider the simple protocol in which each agent flips a fair coin, receiving 
at random a color $c\in\binset$. For each agent, compare the colors of the next two agents encountered.
If the colors are equal, then split, and otherwise self-destruct. Observe that if an agent encounters the same agent twice,
the colors must be the same,  while if an agent encounters two different agents, the colors are independently random.
Consequently if the population currently has size $m$, then
the probability of splitting is $\frac{1}{2} + \Theta(\frac{1}{m})$, which is slightly larger than the probability
$\frac{1}{2}-\Theta(\frac{1}{m})$ of self-destructing, and so this protocol would cause the population to
increase slightly over time. To compensate this, modify the protocol to 
split only with probability  $1-\Theta(\frac{1}{\N})$ if the colors are equal while still splitting
with probability $1$ if the colors are unequal.\footnote{Another perspective on why the protocol
without this step cannot maintain a stable population is
that the protocol run by each agent has no dependence on the population target $\N$. Consequently,
if agents are added or removed
(either by adversary, or even as a result of random drift)
then the protocol should behave as if those agents were there to
begin with and will not correct this deviation in the population.}

Now, the population will stay the same size in expectation if its current size $m$ is equal to the target $\N$,
will decrease in expectation if $m>\N$, and will increase in expectation if $m<\N$. 
Qualitatively, this is exactly the behavior that we want. 
However, tending in expectation to correct itself is insufficient for maintaining a stable population. 
In fact, despite a very weak bias to correct drifts in the population, the signal is overwhelmed by the
noise, and the size of the population under this protocol will behave very much like a random walk. 
Even in the absence of an adversary, this protocol will cause the population to drift extremely far from
its initial size. 

In some sense the protocol we have just outlined behaves even worse than the empty protocol,
in that it
fails to maintain a stable population when there is no adversary at all.
However, the protocol does have one intriguing feature. 
It entirely lacks any ``special'' agent types for the adversary to exploit. 
Consequently, if we could design a more sophisticated protocol along these lines
that could maintain a stable population in the absence of an adversary, we might hope
that it could do the same even in the presence of an adversary. 

\subsubsection{Overview of our protocol}
We now describe our actual protocol. 
At a very high level, the idea behind our protocol is as follows.
Through some \emph{coloring process} which we will discuss below, 
agents are colored with the colors $\binset$. 
After the agents are colored, agents run a step called the \emph{evaluation phase} in which agents make the decision
of whether to reproduce or self-destruct.  The coloring
process and evaluation phase are then repeated indefinitely. We will refer to each iteration of the coloring process
followed by the evaluation phase as an {\em epoch.}

During the evaluation phase, each 
agent that is matched with another agent in this round compares its own color with the color of its \emph{neighbor}
(i.e.~the agent to which it is matched). 
If the two agents have the same color, then the agent will replicate itself with some probability $p_\splt$.
If the two agents have different colors, then the agent will self-destruct.
Note that if the coloring process consisted of every agent tossing its own coin, then this would be essentially
the same as Attempt~2 above.  We will instead employ a more structured coloring process 
that results in agent colors that are not generated independently at random.

The coloring process will consist of two phases, a (noninteractive) \emph{leader selection phase} and a
\emph{recruitment phase}.
In the leader selection phase, $\Theta(1/\sqrt{\N})$ of the agents will become ``leaders''\footnote{That is, each agent will become a leader with probability $\Theta(1/\sqrt{\N})$.} and each leader will choose a random color in $\binset$.
In the recruitment phase, each leader will identify $\sqrt{\N}$ uncolored agents and color each of them  
with its own color. We note that the leader will not directly encounter each of these $\sqrt{\N}$ agents,
but will directly color some agents which in turn will color other agents.
To begin with, each leader activates the first inactive agent that it encounters, sharing its color with
the new agent. Each agent is subsequently responsible for recruiting $\sqrt{\N}/2$ inactive agents
in the same manner, forming a recruitment tree of depth $\frac{1}{2}\log \N$.
By delegating the coloring in this manner, the recruitment process can be performed in only $O(\log^2\N)$ 
rounds.\footnote{In order to achieve constant message size, our full protocol will be slightly different and will
use additional rounds for the recruitment process.}

For each leader, at the end of the recruitment phase, $\sqrt{\N}$ agents will have obtained a color
based on the original coin toss of that leader. We will refer to these $\sqrt{\N}$ agents as 
the \emph{cluster} associated with the leader, and we will sometimes describe agents as belonging to the
same cluster or different clusters. 
At the end of the recruitment phase and the beginning of the evaluation phase,
a constant fraction of the population will have been colored
having been recruited into the clusters of the various leaders, 
roughly half with color $0$ and the other half with color $1$.

Consider a particular agent in the evaluation phase.
If the agent meets another agent from the same cluster, then they necessarily
have the same color. If the agent meets an agent from a different cluster, then their colors are independently random.
Consequently, if the current population size is $m$, then with probability 
$\frac{1}{2}+\Theta(\frac{\sqrt{\N}}{m})$ the two agents will have the same color, and with probability
$\frac{1}{2}-\Theta(\frac{\sqrt{\N}}{m})$ the two agents will have different colors.
We can choose the splitting probability $p_\splt =1- \Theta(1/\sqrt{\N})$ so that the expected change in population
is zero for $m=\N$. 
For $m\gg \N$ the expected change in population will be negative, and for $m\ll \N$ the expected
change in population will be positive. 
Moreover, unlike in Attempt~2 above, which behaved similarly to a random walk,
here the effect is strong enough to maintain the population in a small interval around the target value $\N$,
with all but negligible probability.

Moreover, the adversary can do little to influence the result of the protocol. For a population size 
$m=\Theta(\N)$, the number of leaders selected is $\sqrt{\N}$, and so the standard deviation of 
the number of leaders with each color is roughly $\N^{1/4}$. Consequently, an adversary that
can insert or delete $o(\N^{1/4})$ agents can do little to influence the distribution of colors.
Even if the adversary selectively inserts or deletes agents that are leaders and have a particular color,
the effect of the adversary on the distribution of colors is dominated by the random deviation of the
sampling process. Unlike in Attempt~1 above, there are many leaders, and so the adversary is unable
to delete enough of them or to insert enough additional leaders to overwhelm the protocol. 
As we will show, the adversary cannot cause substantial deviations in the size of the population.

Note that one strategy the adversary may attempt is inserting agents that do not know the correct round number
within the epoch. 
In the protocol described so far, there is no mechanism for detecting and correcting this, and so over many rounds,
adversarial insertions may lead to a population of agents attempting to execute different portions of the protocol.
In order to address this, agents can exchange which round they are in, and self-destruct upon encountering
an agent that is in a different round of the epoch. This results in the self-destruction of any agent 
with the wrong round number as soon as it encounters an agent with the correct round number. A corresponding number
of correct agents are also destroyed, but we will show that the number of correct agents removed in this manner
is sufficiently small.   

As discussed briefly in Section~\ref{ssec:contributions}, we can think of this protocol as encoding
the population size in the variance of a distribution and then sampling from this distribution to obtain
a weak estimate of the variance. 
Since the variance of the fraction of successes in many independent Bernoulli trials decreases 
as the number of trials increases, 
if the number of leaders is larger, then the fraction of colored agents with color $0$ will
be more closely concentrated around $1/2$.
On the other hand, if the number of leaders is smaller, then we expect the fraction of colored agents with color $0$
to be farther from $1/2$.
Since the expected number of leaders is proportional to the current size of the population, 
an approximation to the population size is encoded in the fraction of agents of each color. 
Consider the distribution obtained by selecting an agent at random and reading its color. 
Comparing the colors of two agents serves as a very weak estimate of the variance of this distribution,
while aggregating the results of many agents' individual choices of whether to replicate or self-destruct
serves to amplify the accuracy of this estimate.

\paragraph{Achieving constant-size messages.} The protocol described above involves messages
of size $\Theta(\log\log \N)$ bits, essentially as large as the entire memory of an agent. We now outline how to modify
the protocol to use constant-size messages.  The only large portions of the messages described so far
consist of the current round in the epoch and the depth in the recruitment tree, each of which
can be encoded in $\Theta(\log\log\N)$ bits. 

The current round in the epoch is sent to prevent the adversary from confusing the protocol by inserting
agents with the wrong round number. However, rather than sending the exact round, we will instead
send the single bit specifying whether or not the agent is currently in the evaluation round. If an agent is entering
the evaluation round and its neighbor is not, then both agents will self-destruct. We can show
that this suffices to maintain the invariant that a large majority of the agents are in the same round of the epoch.

The depth of the recruitment tree is sent to allow each leader to induce the recruitment of the correct number
of agents. Note that we cannot simply recruit for $\frac{1}{2}\log \N$ rounds, since recruiting agents may encounter
other agents that are already colored and cannot recruit them. 
However, we can slow down the recruitment process to allow the depth in the recruitment tree
to be determined as a function of the round number. To do this, we recruit for $\Theta(\log^3 \N)$ rounds,
divided into $\frac{1}{2}\log \N$ subphases of $\log^2 \N$ rounds 
each.\footnote{We actually only require that subphases are $\omega(\log \N)$ rounds, so it is sufficient to recruit for
$\omega(\log^2 \N)$ rounds.}
In a single subphase, a recruiting agent will recruit only the {\em first} inactive agent it sees 
even if it encounters many inactive agents in the subphase. 
This allows agents to determine their depth in the recruitment tree based on the round in which they
were recruited.
Since a subphase consists of $\omega(\log\N)$ rounds, we will show that an inactive agent will be encountered
with high probability.

This yields a population control protocol with constant-size messages, since messages
consist of four binary values, namely an agent's color, whether or not it is active, whether or not it is recruiting, 
and whether or not it is currently in the evaluation round. In the analysis we will see how to achieve the
same result with only three-bit messages.  

\subsection{Related work}\label{related} %

\vspace{3pt}
\noindent
{\bf Population Protocols.}
The population protocol model was introduced by Angluin et al.~\cite{AADFP04,AADFP06}. In this model a collection of agents, which are modeled by finite state machines, move around unpredictably and have pairwise interactions.
The original definition considers a worst-case environment/scheduler, while later formulations~\cite{AAE07} consider the case where each interaction occurs between a pair of agents chosen uniformly at random.
In a population protocol, agents start with an initial configuration, and the goal is to jointly compute a function of this input.
Previous works have tried to identity the class of functions that can be computed in such a model~\cite{AAER07}, and the tradeoffs between the resources need to do so (e.g.~\cite{AAEGR17}). In these works, 
the agents are always active throughout the execution of the protocol.

Another line of work expands the population model to the case in which agents 
can crash or undergo transient failures that corrupt their states.
Delporte-Gallet et al.~\cite{DFGR06} consider a setting in which agents must compute a function of their inputs
in presence of such failures.
They construct a compiler that takes as input a protocol that works in the failure-free model, and outputs a protocol that 
works in the presense of failures as long as modifying a small number of inputs does not change the function output.
Angluin et al.~\cite{AAFJ08} incorporated the notion of self-stabilization into the population protocol
model, giving self-stabilizing protocols for some classical problems such as leader
election and token passing. They focus on the goal of stably maintaining some property such as having a
unique leader or a legal coloring of the communication graph.

Unlike these works, in our work agents have the ability to reproduce and self-destruct,
and the goal of maintaining a consistent population size must be carried out in the presence of an 
adversary with the corresponding capability to insert and delete agents.

\vspace{3pt}
\noindent
{\bf Approximate Counting.}
The problem of maintaining the population size of a collection of memory-constrained agents is related to the problem of counting
$N$ items when the available memory of the agents is less than $\log N$ bits.
Approximate counters were introduced by Morris~\cite{Mor78} as technique to accurately approximate a value $N$ using only
$\Theta(\log \log N)$ bits of memory. Techniques for approximate counting in the population model were developed in~\cite{ABBS16,AAEGR17}.

A sequence of works by Di Luna et al.~\cite{LBBC14,LBBC14b} consider the problem of estimating the size of a network of agents 
that communicate according to a dynamic connection graph, in presence of an adversary that can add and remove {\em edges} in the graph.

\vspace{3pt}
\noindent
{\bf Cellular Automata.}
Cellular automata were proposed by von Neumann~\cite{Neu} as a model to reason about artificial self-reproduction.
A cellular automaton consists of a regular grid of cells, each assuming one of a finite number of states.
Over time, the states of cells change according to some fixed rule (e.g.~a mathematical function) that determines the new state of each cell in terms of the current state of the cell and the states of the cells in its neighborhood.
Conway's Game of Life~\cite{GameofLife} is a cellular automaton that works with a simpler set of rules than von~Neumann's rules, and was shown to be Turing-complete by Berlekamp, Conway and Guy~\cite{BCG}.
Cook~\cite{Cook} proved that rule 110 (a binary, {\em one}-dimensional cellular automata) is Turing-complete.

Our setting is crucially different from the cellular automaton setting, since agents in our model
do not simply change state but can be deleted from the system during the computation.
Another difference between our setting and the cellular automaton setting is that we consider an {\em adversarial} model whereas in cellular automata cells deterministically change state.
In some sense, in Conway's game, death ``plays by the rules'' while in our game death is sudden and unpredictable.

\vspace{3pt}
\noindent
{\bf Dynamic Environments}.
A recent work of Goldreich and Ron~\cite{GR17} considers environments that evolve according to a fixed local rule. They define an {\it environment} as a collection of small components of a large system which interact in a local level, and change state according to a fixed rule. As an example, they focus on the model of a two-dimensional cellular automata. They ask how many queries a global observer must make about local components in order to test whether the evolution of the environment obeys a fixed known rule or to predict the state of the system at a given time and location.
Although their work seems very different than ours, it bears some intellectual similarity in seeking information about a global property of the system from local information.
However, whereas in \cite{GR17} a global observer who can query a limited number of individual cells asks ``does the global system obey a specific evolution rule,'' in our case
individual agents need to decide locally what they should do to maintain the overall global property of population size.

\vspace{3pt}
\noindent
{\bf Self-Stabilization.}
Also related to our question is the self-stabilization problem introduced by Dijkstra~\cite{Dj74}.
Given a system that starts in an arbitrary state, the goal of a stabilization algorithm is to eventually converge to the correct state. In this setting, however, deletion of system components is not considered.
Super-stabilization~\cite{DH97} is the problem of achieving self stabilization in dynamic networks, that is, network where nodes are dynamically added and removed. While this setting is closer to ours,
super-stabilization algorithms make additional assumptions about the system, such as that each
node in the system is uniquely identified.

\vspace{3pt}
\noindent
{\bf Distributed Algorithms Explaining Ant Colony Behaviors.}
A single ant has very limited communication and processing power, yet collectively a colony of ants can perform complex tasks such as consensus decision-making, leader election, and navigation.
In \cite{CDLN14} Cornejo et al.~give a mathematical model for the problem of task allocation in ant colonies,
and propose a very efficient protocol for satisfying this task. One of the main goals of their paper is to provide
a formal model enabling the comparison of the
various task allocation algorithms proposed in the biology literature.
Similarly, in~\cite{GMRL15} Ghaffari et al.~use techniques from distributed computing theory in order
to gain insight into the {\em ant colony house-hunting problem}, where a set of agents need to identify potential nests, evaluate the quality of candidates, and reach consensus in a distributed manner.

\section{The Population Stability Problem}\label{model}
As discussed above, the population stability problem is concerned with
a system of agents with bounded memory and the ability to reproduce and self-destruct.
The system is subjected to adversarial attacks that delete or insert processors. 
The objective is to maintain a stable population size despite these adversarial attacks. In this section we 
give a formal description of the problem.

\paragraph{Parameters.}
The population stability problem is parameterized by the initial number of agents $\nzero$, the number of distinct memory states 
$\agentmemory$ each agent can be in, the number of  alterations $\maxkill$ the adversary is allow to make in each round (where an alteration consists of removing or inserting an processor), a value $\alpha$ specifying how tightly concentrated the population size must remain around the target value $\nzero$, and a value $\gamma$
specifying a lower bound on the fraction of processors that are matched with other processors in each round.
Below we describe each of these components of the problem.

We note that one could consider separate parameters for the number of adversarial deletions and insertions, 
allowing the adversary to make a different number of each. 
In this paper we will consider both to be bounded by a single parameter $\K$.

\paragraph{Agents.}
We consider agents with bounded memory. %
Each agent can be in one of $\agentmemory$ possible states, so we can think of agents as having $\log \agentmemory$
bits of memory. Agents can communicate by message-passing 
as specified by the connectivity structure of the system, which will be discussed further below. 
In our setting we will have $\agentmemory \ll \nzero$, 
so each individual agent has insufficient memory to count the total population, to posses a unique ID,
or to address messages to a particular recipient.
Each agent has the ability to flip unbiased coins, %
to split into two identical agents,  or to self destruct. That is, in any round  an agent may choose to
split into two daughter agents which both inherit specified state from the parent agent, or it can decide to 
delete itself from the system.

\paragraph{Connectivity.}
As discussed above, we consider a {\em synchronous}  version of the population model of 
Angluin et al.~\cite{AADFP04,AAE07},
where we assume the existence of a global clock.
We assume that the pairs of agents that are able to communicate in each round
are selected by choosing a random matching of at least a $\gamma$ fraction of surviving agents. 
We think of the parameter $\gamma$ as a constant (e.g. $\gamma=1/4$).  
That is, each agent is matched with at most one other agent (that we call its \emph{neighbor}) in each round,
and there is no consistency from round to round, since connectivity in different rounds is determined by sampling 
independently random matchings. The schedule of these matchings is unknown to the adversary in advance.

\paragraph{Adversary.}
We consider a worst-case, computationally unbounded adversary that can arbitrarily choose which agents to delete in
each round and can insert agents with arbitrary state in each round. 
The adversary also can observe the entire history of agent interactions, including
the memory contents of every agent.
While the initial state of inserted agents is determined by the adversary, 
the newly inserted agents are assumed to follow the protocol 
(that is, the agents introduced by the adversary do not behave maliciously).

\paragraph{Objective.}
The goal in the population stability problem is to maintain a number of agents within a small interval 
$[(1-\alpha)\nzero, (1+\alpha)\nzero]$ around the initial population size $\nzero$. That is, initially the system consists of 
$\nzero$ agents. In each round some agents may be removed or inserted by the adversary, and some agents may decide to 
replicate or to self-destruct. 
Let $N_i$ denote the number of agents in the system after the $i$th round. The adversary wins in round $i$ if
$N_i \notin [(1-\alpha)\nzero, (1+\alpha)\nzero]$ at the end of the round. 
We say that protocol $\Pi$ is a population stability protocol %
if for any polynomial $p$ and any adversary, the probability that the adversary wins
in at most $p(\N)$ rounds is negligible in $\N$.

\section{The protocol}
\label{sec:protocol}

We now provide a formal specification of the main protocol (Algorithm~\ref{alg:growth-main}).
Agents continually run the protocol throughout their lifetime. We will think of time as partitioned into epochs
of $\T = \frac{1}{2}\log^3 \N$ rounds.
Each epoch consists of three phases,
the leader selection phase, the recruitment phase, and the evaluation phase.
The recruitment phase consists of $\frac{1}{2}\log \N$ subphases each consisting of roughly
$\Tinner = \log^2 \N$ rounds.\footnote{More generally, we want 
$\T=\Tinner \cdot \frac{1}{2}\log \N$ for any $\Tinner = \omega(\log \N)$. The first and last subphase
will each be shorter by one round to account for the leader selection and evaluation phases.} 
We will elaborate on each of these phases below. 

Recall that agents have the ability to toss coins, to send and receive a message 
upon encountering another agent, %
to reproduce by splitting into two identical copies of itself, and to self-destruct. These capabilities are
notated by the following functions. We denote flipping an unbiased coin by $x\rgets\binset$. 
Agent splitting and death are implemented in commands 
$\Call{Split}$ and $\Call{Die}$.
Finally, the command $Z := \Call{\Communicate}{X}$ sends message $X$ to 
the neighboring agent in the present round, if any,
simultaneously receiving in response message $Z$.\footnote{Recall that an agent's
\emph{neighbor} is the other agent the agent
is randomly matched with in this round and that matchings in each round are independent 
and uniformly random.}
If the agent is unmatched in this round and
has no neighboring agent, then the return value is assumed to be $\bot$. 
In the protocol below, messages will consist of four boolean values 
$(\SelfIsER,\amActive,\mycolor,\recruiting)$. Upon receiving message $Z$, the value of
each of these variables can be accessed by writing $Z.\SelfIsER$, $Z.\amActive$, and so forth. If $Z=\bot$ then 
we will follow the convention that each of these components will also have value $\bot$. 

The main variables that describe the state of an agent are $\round \in [0,\T-1]$ and four boolean
variables $\amActive\in\binset$
$\mycolor \in \binset$, $\recruiting\in\binset$, and $\SelfIsER\in\binset$. The variable 
$\NbrStatus\in\binset^4$
stores the most recent message received, consisting of four boolean values
$(\Nbr.\SelfIsER,\Nbr.\amActive,\Nbr.\mycolor,\Nbr.\recruiting)$.
An additional variable, $\torecruit \in [0,\frac{1}{2}\log{\N}]$, is not necessary for the protocol itself but
is used in the analysis. We emphasize
that these variables are local to a single agent, so that different agents have independent
copies of each of these variables. 
Initially, at the onset of the system, for each agent we will have 
that all variables are set to zero.

The variable $\round$ keeps track of which round it is within the epoch. The variable
is incremented modulo $\T$ after each round, 
ensuring that agents begin and end each phase and each epoch at the same time. 

The variables $\amActive$ and $\mycolor$ specify whether an agent has been activated and colored,
as well as the color of the agent.
In the first round of each epoch,
some of the agents will designate themselves as leaders and will become active,
choosing at random 
a color $\mycolor \rgets \binset$, while the rest of the agents remain inactive.
During the recruitment phase, additional agents will become active and will receive colors in 
$\binset$, as inherited from a leader. The value of variable $\mycolor$ is only relevant
for active agents.

The variable $\recruiting$ specifies whether or not an active agent is trying to recruit in the present
subphase. Each active agent should recruit only one additional agent in a single subphase, so
this variable specifies whether or not the agent is still looking for an inactive agent to recruit in
the subphase.

The variable $\SelfIsER$ specifies whether the agent is currently in the evaluation phase, which is true exactly when 
$\round = \T-1$. 

Finally, the variable $\torecruit$ specifies the number of additional followers
an active agent is tasked with recruiting directly, which is the logarithm of the total number
of agents that should be activated as a result of the given agent. 
When a new leader first becomes active, it sets $\torecruit = \frac{1}{2}\log\N$, indicating
that it is tasked with recruiting a total of $2^{\torecruit} = \sqrt{\N}$ agents.
Each time a new agent is recruited, the value of $\torecruit$ is decremented and shared
with the newly recruited agent, indicated that each of the two is responsible for recruiting only
half of the total. For instance, after the first time a leader recruits another agent,
both agents will have $\torecruit = \left(\frac{1}{2}\log \N\right) - 1$, and so each
of the two agents subsequently are responsible for recruiting only $\sqrt{N}/2$ agents. 
In this way a leader can induce the recruitment of $\sqrt{\N}$ agents in roughly a logarithmic number of rounds.
Although the variable is not used by the algorithm itself, we will refer to it in the analysis.

We first present the main procedure run by each agent in every round. Each agent
first exchanges messages with its neighboring agent in this round, if any. 
The agent then performs a consistency check on its on state and the state of its neighbor.
Then, depending on the value of variable $\round$ modulo $\T$, the program
calls the appropriate subroutine for the corresponding phase. 
\begin{algorithm}[H]
\caption{Main protocol}
\label{alg:growth-main}
\begin{algorithmic}[1]
\Procedure{MainProtocolStep}{}
\State $\Call{\ExchangeMessages}$
\State $\Call{\CheckRoundConsistency}$
\Statex

\If {$\round=0$} \Comment{Handle initialization of leaders}
  \State $\Call{\ElectLeader}$
  \State $\round := \round + 1$
\ElsIf {$\round < \interval - 1$} \Comment{Perform recruitment phase}
  \State $\Call{\RecruitmentPhase}$
  \State $\round := \round + 1$
\Else \Comment{Final round of phase. Perform split or death.} 
  \State $\Call{\EvaluationPhase}$
  \State $\round := 0$
\EndIf
\EndProcedure
\end{algorithmic}
\end{algorithm}

We now describe the subroutine that exchanges messages with the neighboring agent, if any.
An agent simply computes the indicator value of whether it is in the evaluation phase,
and sends this information along with its activation state, color, and recruiting status. 
It receives a corresponding message from its neighbor, or the value $\bot$ if 
it has no neighbor in this round.

\begin{algorithm}[H]
\caption{Subroutine to send and receive messages with neighboring agent}
\label{alg:gro:send-and-receive}
\begin{algorithmic}[1]
\Procedure{\ExchangeMessages}{} %
\If { $\round = \interval-1$ }
  \State $\SelfIsER := 1$
\Else
  \State $\SelfIsER := 0$
\EndIf
\State $\MyStatus := (\SelfIsER,\amActive,\mycolor,\recruiting)$
\State $\NbrStatus := \Call{\Communicate}{\MyStatus}$
\EndProcedure
\end{algorithmic}
\end{algorithm}

We now describe the leader selection subroutine, which comprises the first round of each epoch.
This is an entirely non-interactive process in which each agent becomes a leader with some fixed probability
$1/(8\sqrt{N})$ by tossing its own coins, entirely independently of each other agent.
With overwhelming probability, if the total number of agents is $\Theta(\N)$, the number of leaders
chosen in this phase will be $\Theta(\sqrt{\N})$. Each newly activated
leader chooses a random color in $\{0,1\}$ and is tasked with recruiting $\sqrt{\N}$ agents and
assigning them this color.

\begin{algorithm}[H]
\caption{Leader selection phase subroutine}
\label{alg:gro:leader-election}
\begin{algorithmic}[1]
\Procedure{\ElectLeader}{} %
  \State $\amActive := \Call{\BiasedCoin}{\log(8\sqrt{\N})}$.
  \Statex\Comment{$\amActive=1$ with probability $1/(8\sqrt{\N})$, and otherwise $\amActive=0$.}
  \If {$\amActive = 1$} 
    \State $\mycolor \rgets \{0,1\}$ 
    \State $\recruiting := 1$
    \State $\torecruit := \frac{1}{2} \log(\N)$
  \EndIf
\EndProcedure
\end{algorithmic}
\end{algorithm}

The leader selection phase, as well as the evaluation phase below,
requires the ability to flip biased coins with bias $1/\poly(\N)$, in particular
$1/\Theta(\sqrt{\N})$, where bias refers to the probability of the coin having value $1$
(so that a fair coin has bias $1/2$). Recall that we assume only the ability to toss unbiased coins.
We now give a simple procedure to obtain the desired bias using
only $O(\log\log(\N))$ bits of memory, assuming that $\log \N$ is an even integer. More generally, we
show how to obtain bias $2^{-a}$ for any integer $a$ using $1+\lceil\log a\rceil$ bits of memory. 
We note that it is sufficient to toss $a$ coins and report $1$ if they all landed heads
and $1$ otherwise. This requires counting to $a$, which can be done with $\log a$ memory.

\begin{algorithm}[H]
\caption{Subroutine to toss a biased coin that equals $1$ with probability 
$2^{-a}$ and equals $0$ otherwise}
\label{alg:gro:biased-coin}
\begin{algorithmic}[1]
\Procedure{\BiasedCoin}{a} \Comment{Flip a biased coin with bias $\Pr[c=1]=2^{-a}$}
\State $c := 1$
\For {$i=1$ to $a$ }
 \State $b \rgets \{0,1\}$ 
 \If {$b=0$}
    \State $c := 0$
 \EndIf
\EndFor
\State \Return $c$
\EndProcedure
\end{algorithmic}
\end{algorithm}

We now describe the recruitment phase, which is the second phase executed by each agent in every epoch
and is the main source of interaction in the protocol. The phase lasts for $\T=\Theta(\log^3 \N)$ rounds,
consisting of $\frac{1}{2}\log \N$ subphases each of length $\Tinner = \Theta(\log^2\N)$ rounds.
As discussed above, during this phase
 each leader is tasked with finding $\sqrt{\N}$ inactive agents (i.e. with $\amActive = 0$)
and coloring each of them with the color of the leader. We note again that the leader will not directly meet
each of these $\sqrt{\N}$ inactive agents, but rather that this is done by propagation, where the leader
will activate some agents and each of these will activate additional agents.
In each subphase each active agent will attempt to recruit a single nonactive agent, which will
then start to recruit in the following subphase. Since there are $\frac{1}{2}\log \N = \log\sqrt{\N}$ subphases,
if each attempt to recruit is successful, then a single leader will result in the activation of a total of $\sqrt{\N}$ agents.
\begin{algorithm}[H]
\caption{Recruitment phase subroutine}%
\label{alg:gro:recruitment-phase}
\begin{algorithmic}[1]
\Procedure{\RecruitmentPhase}{} %
\If {$\NbrActive = 0$ and $\recruiting = 1$} \Comment{Other agent has been activated}
    \State $\recruiting := 0$
    \State $\torecruit := \torecruit - 1$
\ElsIf {$\amActive = 0$ and $\NbrRec = 1$}  \Comment{This agent must be activated}
    \State $\amActive := 1$
    \State $\mycolor := \NbrColor$
    \State $\recruiting := 0$
    \State $\torecruit := \frac{1}{2}\log \N - \left\lceil (\round+1) / \Tinner \right\rceil$
  \EndIf
  
\If {$\round \equiv -1 \pmod{\Tinner}$} \Comment{Prepare for next round}
  \State $\recruiting := 1$.
\EndIf

\EndProcedure
\end{algorithmic}
\end{algorithm}

The final phase of the algorithm is the evaluation phase, which occurs on the last round of each epoch.
In this phase, each matched active agent compares its color to that of its neighbor and makes a decision
of whether to replicate itself or to self-destruct. 

\begin{algorithm}[H]
\caption{Subroutine to perform splitting and self-destruction at the end of each phase}
\label{alg:gro:evaluation-phase}
\begin{algorithmic}[1]
\Procedure{\EvaluationPhase}{} 
  \If {$\amActive=1$ and $\NbrActive=1$} 
    \If {$\NbrColor = \mycolor$} \Comment{Colors same: split with probability $1-16/\sqrt{\N}$}%
      \State $c := \Call{\BiasedCoin}{\log(\sqrt{\N}/16)}$
      \If {$c=0$} 
        \State $\Call{Split}$ %
      \EndIf
    \Else \Comment{Colors different: self-destruct with probability $1$} %
      \State $\Call{Die}$ 
    \EndIf
  \EndIf 
  \State $\amActive := 0$
  \State $\mycolor := 0$
  \State $\recruiting := 0$
\EndProcedure
\end{algorithmic}
\end{algorithm}

Finally, we give the subroutine invoked at the very beginning of each round,
that performs a consistency check on the
round values of the agent and its neighbor. 
In the absence of adversarial insertions this subroutine is unnecessary, since
agents will always have the correct round value in the epoch. However, it is necessary if the adversary
is allowed to insert agents with an incorrect round value. If left to increase unchecked, the presence
of many agents with different round values would interfere with the operation of the protocol.
We will prevent this from happening by causing agents to self-destruct
as soon as they encounter an agent with a different round value. 
However, implementing this exactly would require $\Theta(\log\log\N)$-bit  messages, 
since agents would need to exchange their round numbers. To avoid this, we will instead
have agents exchange only an indicator variable for whether or not they are in the evaluation
phase. An agent will self-destruct if it is in the evaluation phase and its neighbor is not,
or if its neighbor is in the evaluation phase and it is not. 
This process deletes a small number of agents with the correct round number along with agents
with the incorrect round number. We will show in the analysis below that this will ensure that there
are few agents with the incorrect round number and that only a small number of agents
with the correct round number will self-destruct as a result of this procedure.

\begin{algorithm}[H]
\caption{Subroutine to determine if agent knows the correct round}
\label{alg:gro:round-consistency}
\begin{algorithmic}[1]
\Procedure{\CheckRoundConsistency}{} %
\If { $\Nbr \neq \bot$ and $\SelfIsER \neq \NbrIsER$ }
  \State $\Call{Die}$
\EndIf
\EndProcedure
\end{algorithmic}
\end{algorithm}

\section{Analysis}
\label{sec:analysis}
In this section we prove the main theorem. 

\begin{theorem}
Let $\alpha,\gamma,\epsilon$ be positive constants, where $\gamma\leq 1$ is a lower
bound on the fraction of agents that is matched in each round.
Then Algorithm~\ref{alg:growth-main} is a population stability 
protocol using $\omega(\log^2 \nzero)$ states
per agent and three-bit messages\footnote{A straightforward implementation of the protocol described above
would use $\Theta(\log^4 \nzero)$ states and four-bit messages. We describe below how to achieve the improved bounds 
stated here.}
guaranteeing that if the adversary inserts and deletes at most $\maxkill = O(\N^{1/4-\epsilon})$
agents in each round, then with
all but negligible probability the population will remain between $(1-\alpha)\nzero$ and $(1+\alpha)\nzero$ 
for any polynomial number of rounds. 
\label{thm:mainthm-restated}
\end{theorem}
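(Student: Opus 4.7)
My plan is to analyze the protocol epoch-by-epoch and exhibit a drift on $|m-\N|$ that pulls the population toward $\N$ whenever it deviates, then apply concentration together with a union bound over any polynomial number of epochs. Since each epoch has length $\T = \frac{1}{2}\log^3\N$ and the adversary removes or inserts at most $\K = O(\N^{1/4-\epsilon})$ agents per round, the total adversarial perturbation over one epoch is $\K\T = O(\N^{1/4-\epsilon}\polylog\N) = o(\N^{1/4})$. This must be compared with the intrinsic $\Theta(\N^{1/4})$ standard deviation in the number of leaders elected of a given color, and this gap is the single most important quantitative lever in the argument.

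First I would establish three structural invariants that hold with overwhelming probability at the boundary of each epoch, conditional on $m \in [(1-\al)\N,(1+\al)\N]$. \textbf{(I1) Round consistency:} all but $O(\K\T)$ agents share the same value of $\round$, because \CheckRoundConsistency{} destroys any agent whose evaluation-phase bit disagrees with its matched neighbor, and by the $\gamma$-matching assumption any mismatched agent meets a correctly-phased neighbor within $O(\log\N)$ rounds except with negligible probability. \textbf{(I2) Leader concentration:} the number of leaders is $\tfrac{m}{8\sqrt{\N}}(1\pm o(1))$ by a Chernoff bound, and each color receives roughly half of these with Chernoff deviation $O(\N^{1/4}\sqrt{\log\N})$. \textbf{(I3) Cluster completion:} each leader grows a full cluster of $\sqrt{\N}$ agents, since each subphase has $\Tinner = \omega(\log\N)$ rounds and a constant fraction of the population remains inactive throughout the phase, so each recruiting agent meets an inactive neighbor in its subphase with probability $1-1/\poly(\N)$; a union bound over the $\Theta(\sqrt{\N}\log\N)$ recruitment attempts controls the global failure probability.

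Given these invariants, I would analyze the expected net population change during the evaluation round as a function of $m$. Conditioned on a random matched pair of active agents, with probability $\tfrac{1}{2}+\Theta(\sqrt{\N}/m)$ the two agents lie in the same cluster and hence share a color, triggering a split with probability $1-16/\sqrt{\N}$; with probability $\tfrac{1}{2}-\Theta(\sqrt{\N}/m)$ they carry independent random colors, triggering either a split or a death. A direct accounting then shows that the expected per-epoch change in $m$ is zero at $m=\N$, with sign opposite to $m-\N$ and magnitude $\Theta((m-\N)\gamma)$ for $m$ near $\N$. To bound the variance of this change I would observe that the per-pair decisions depend only weakly on one another through the random matching, so a Hoeffding or Efron--Stein-type argument yields an $O(\sqrt{m})$ standard deviation per epoch.

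The main technical obstacle is precisely this drift-versus-noise tradeoff in the presence of a state-aware adversary: because the adversary can concentrate its $\K$ alterations on leaders of a particular color, or try to wipe out an entire cluster, I must verify that even this worst-case tampering shifts the color-distribution statistic by only $o(\N^{1/4})$, which is dominated by the $\Theta(\N^{1/4})$ intrinsic noise in (I2). Once this is in hand, the sequence of per-epoch population sizes becomes a supermartingale whenever $m > (1+\halfal)\N$ and a submartingale whenever $m < (1-\halfal)\N$, with bounded increments, so a standard Azuma/Freedman bound (or a Lyapunov argument using the potential $\Phi(m)=(m-\N)^2$) shows that the probability of exiting $[(1-\al)\N,(1+\al)\N]$ in any given epoch is $\negl(\N)$, and a union bound over $\poly(\N)$ epochs finishes the proof. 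The reduction from four-bit to three-bit messages is a cosmetic step at the end: one of the four bits, for instance $\recruiting$, is a deterministic function of $\amActive$ and the subphase boundary and so need not be transmitted.
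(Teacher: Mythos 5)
Your overall architecture---bookkeeping invariants (your I1--I3), a per-epoch expected-drift computation, bounded per-epoch increments, and concentration over many epochs with a union bound---is the same as the paper's (its lemmas on round consistency, recruitment completion, bounded deviation, correction in expectation, and the final combination). The genuine gap is in the step you yourself flag as the main obstacle and then leave as an assertion: making the expectation computation legitimate against an adversary that is \emph{adaptive} and sees every agent's memory, including coin flips. Because insertions and deletions can be correlated with the realized colors (e.g.\ always delete leaders of the minority color, or kill recruiting agents mid-epoch so that surviving cluster sizes are correlated with color), conditioning on the adversary's actions destroys the premise that cluster colors are independent fair coins, and your same-cluster/different-cluster accounting no longer applies as written. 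Saying that the tampering shifts the color statistic by $o(\N^{1/4})$, ``dominated by the intrinsic noise,'' is also not the comparison the drift lemma needs: intrinsic sampling noise averages out in expectation, whereas adversarial bias need not, so what must be shown is that the bias on the same-color probability is $o(1/\sqrt{\N})$, i.e.\ dominated by the restoring signal, \emph{and} that this holds under adaptive conditioning. The paper resolves exactly this with an emulation argument inside its drift lemma: any adversary can be simulated by one that defers all deletions of colored agents to the start of the evaluation phase (deleting inactive agents instead) and is then allowed to deactivate members of the affected clusters; since completed clusters are indistinguishable except for color, the affected clusters can be taken from a pre-designated set of $\tilde{O}(\N^{1/4-\epsilon/2})$ leaders, so all remaining ``honest'' clusters have full size $\sqrt{\N}$ and colors independent of the adversary's behavior, and only then is the expected change computed by splitting matched pairs into honest--honest, honest--adversarial, and adversarial--adversarial. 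Some argument of this kind is indispensable; without it your supermartingale claim is unsupported.

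Secondary points. Your drift magnitude $\Theta((m-\N)\gamma)$ is off: the per-epoch expected correction scales as $\Theta(|m-\N|/\sqrt{\N})$, i.e.\ $\Theta(\sqrt{\N})$ when $|m-\N|=\Theta(\N)$, which is the \emph{same} order as the per-epoch noise $\tilde{O}(\sqrt{\N})$; hence escape cannot be ruled out ``in any given epoch'' but only by accumulating drift over many epochs (the paper averages over $\N^{0.01}$ epochs via Chernoff--Hoeffding; your Azuma/Freedman plan works at that level, not epoch-by-epoch). In I1, mismatched agents are detected only when the single exchanged bit $\SelfIsER$ differs, i.e.\ around an evaluation round, not within $O(\log \N)$ rounds, though the resulting invariant of $O(\gamma^{-1}\N^{1/4})$ inconsistent agents is the right one. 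Finally, the three-bit reduction as you state it is incorrect: $\recruiting$ is not a deterministic function of $\amActive$ and the subphase boundary (it records whether the agent has already recruited in the current subphase); the paper instead omits a different field depending on the values of $\SelfIsER$ and $\recruiting$.
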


For ease of presentation, the version of the protocol described above uses four-bit messages. 
However, we can reduce the message size to three bits, as follows. 
If the agent is in the evaluation phase (i.e.~$\SelfIsER=1$) then the message must contain the values 
$\amActive$ and $\mycolor$, but need not contain $\recruiting$. 
If $\SelfIsER=0$, then the message must contain the value $\mycolor$ but not $\amActive$ if $\recruiting=1$
and the value $\amActive$ but not $\mycolor$ if $\recruiting=0$. 
Consequently, the desired information can be encoded in only three bits. 

For the memory requirements, $\log T$ bits are needed to store the variable $\round\in \binset$, and the other
variables stored by each agent consist of eight boolean values. The invocations of the subroutine
$\Call{\BiasedCoin}$ require $\log \log N$ bits of local memory. 
However, the subroutine is only invoked in two rounds of each epoch, the  leader selection round and the evaluation round.
Consequently, using additional indicator bits to specify whether an agent is in each of these those two rounds, the memory used to 
store the variable $\round$ can be used as the helper memory for the subroutine, and so additional memory is
not necessary. For $T=\frac{1}{2} \log^3 \N$, the total
number of states is therefore $\Theta(\log^3 \N)$. However, it suffices to have $T=\Tinner\cdot\log \N$ for any
$\Tinner = \omega(\log \N)$, and so we can reduce the number of states to $\omega(\log^2 \N)$. 

\paragraph{Roadmap to proof}
We must show that the population size will remain close to the target value $\N$.
We will do this by means of two key steps.

The first step is to show that in any single epoch the population size will be relatively stable.
That is, we show that the population in the middle or end of an epoch
will not be too much larger or smaller than the population at the beginning of the epoch. 
We achieve this by showing that irrespective of the adversary's actions, with overwhelming probability
the number of agents of each color in the evaluation phase will be concentrated around one-sixteenth of
the total number of agents.
This step is formalized in Lemmas~\ref{lem:agentcounts} and \ref{lem:bounded_deviation} in 
Section~\ref{ssec:bounded-deviation}.

The second step is to show that the population size will tend to correct itself if it has deviated too far
from the target value $\N$.
More precisely, we show that if the population is far from $\N$, 
then in expectation the population at the end of the phase will be substantially
closer to $\N$ than the population at the start of the phase.
This step highlights a key tension of the proof, namely the difficulty analyzing a system
that contains both random components in the matching schedule and worst-case components
in the adversary's insertions and deletions. Indeed, it is not even clear a priori what it means
to discuss expectation in a system with a worst-case adversary.
This step is formalized in Lemma~\ref{lem:correct_in_exp}
in Section~\ref{ssec:correcting-drift}.

Putting these two steps together will enable us to conclude the proof of the theorem. 
Consider the first epoch in which the population size lies outside the interval
$[\omha\N,\opha\N]$. Since the population size does not deviate too much in a single epoch, 
the population at the start of the next epoch will be close to $\opmha\N$. 
But for each epoch in which the population remains outside the interval $[\omha\N,\opha\N]$,
in expectation the change in population will be in the direction of the target value $\N$. 
Considering the next $\N^{0.01}$ epochs, a Chernoff-Hoeffding bound then implies that with overwhelming probability
the population will return to the interval $[\omha\N,\opha\N]$. Since the population
does not change much in each epoch, it follows further that the population will remain inside the interval
$[\oma\N,\opa\N]$ during these $\N^{0.01}$ epochs.  
We will conclude that with all but negligible probability, the population will remain in the interval $[\oma\N,\opa\N]$
for any polynomial number of rounds. 

We now outline the remainder of the section.
In Section~\ref{ssec:bookkeeping-lemmas} we prove some preliminary lemmas about the protocol.
These lemmas provide us with invariants that we will need in order to prove the two key steps above.
In particular, we show that nearly every agent knows the correct round in the epoch,
that at least half of the agents are inactive (i.e. have $\amActive=0$) at any point
in the execution of the protocol, and that any leader selected in the first round of the epoch
will succeed in recruiting a full cluster of size $\sqrt{\N}$ unless either the adversary deletes some agent
in that cluster or an agent with the wrong round number interferes with the recruitment.

In Section~\ref{ssec:bounded-deviation} we prove the first of the key steps, showing that
with high probability the population size does not deviate too much in a single epoch.
In Section~\ref{ssec:correcting-drift} we prove the second of the key steps, showing that if the population
has drifted too far from the target value, then in expectation the population will correct itself. 
Finally, in Section~\ref{ssec:finish-the-proof} we conclude the proof
of the main theorem.

\subsection{Bookkeeping lemmas} 
\label{ssec:bookkeeping-lemmas}

We will first prove several bookkeeping lemmas to guarantee
that with overwhelming probability, certain invariants continue to hold throughout
the execution of the protocol. We will subsequently use these invariants to prove 
stronger statements that will enable us to conclude the correctness the protocol.

Recall that during the protocol, agents keep track of the current round within the epoch, that is,
the round number modulo $\T=\log^3\N$. However, the adversary is empowered to insert 
new agents into the system with arbitrary initial state, and in particular may insert agents with the
incorrect round number. 
Our first lemma provides a bound on the number of agents with the incorrect round number.
Recall that $\gamma=\Theta(1)$ is a lower bound on the fraction of parties in each round that are matched. 
We can assume that $\alpha\leq 1/2$, since the desired statement is stronger for
smaller $\alpha$.

\newcommand{\Nmin}{\N_{\mathsf{min}}}
\begin{lemma}
\label{lem:correctround}
For any $t>0$, suppose that the population size remains above $\Nmin = \N/2$
for the first $t$ rounds of the protocol. Then there exists some negligible function $\negl$ such that
conditioning on this event,
with probability $1-t\cdot \negl(\N)$,
all but $(1+\gamma^{-1})\N^{1/4}$ of the agents will 
have the same value for variable $\round$ in each of these $t$ rounds.
\end{lemma}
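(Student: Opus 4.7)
The plan is to track $b_t$, the number of \emph{wrong-round} agents at time $t$ --- those whose local variable $\round$ disagrees with the common value $r^*_t = t \bmod \T$ maintained by the honest agents. Since all honest agents (those present at initialization or inserted with the correct round) increment $\round$ in lockstep, only the adversary can create wrong-round agents, at a rate of at most $\K$ per round. The key structural fact I would use is that a wrong-round agent inserted at time $s$ with offset $d := (r - r^*_s) \bmod \T \neq 0$ preserves this offset for its entire lifetime. Because $\SelfIsER$ is simply the indicator $\mathbf{1}[\round = \T-1]$, such an agent disagrees with honest agents on $\SelfIsER$ in exactly two rounds of every $\T$-round cycle --- namely when $r^*_t$ equals $\T-1$ or $\T-1-d \bmod \T$. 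In these ``killing rounds'' the round-consistency check (Algorithm~\ref{alg:gro:round-consistency}) causes any matched honest/wrong-round pair to self-destruct.

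Next I would estimate the death rate. Conditioning inductively on the event that the honest population exceeds $\N/2$ and the bad population is $o(\N)$, any wrong-round agent in a killing round is matched with probability at least $\gamma$ and, given a match, is paired with an honest agent with probability $1 - o(1)$; hence it dies in that round with probability at least $\gamma/2$. A bad agent still alive $k\T$ rounds after insertion has therefore survived at least $2k$ such trials, giving a survival probability of at most $(1-\gamma/2)^{2k}$. Summing this geometric series over past insertion times yields the expectation bound $\mathbb{E}[b_t] = O(\K \T / \gamma) = o(\N^{1/4})$, using $\K = O(\N^{1/4-\epsilon})$ and $\T = O(\log^3 \N)$.

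To upgrade to the high-probability bound required by the lemma, I would apply a Chernoff-style concentration inequality, grouping the adversarial insertions by their birth time so that survival events become nearly independent Bernoulli trials. This yields $b_t \leq (1 + \gamma^{-1})\N^{1/4}$ at any fixed round except with probability $\negl(\N)$, and a union bound over $t$ rounds gives the claimed $t \cdot \negl(\N)$ failure probability. The main obstacle is that the survival events are not literally independent: two wrong-round agents with complementary offsets could be matched with each other in a killing round and both survive, since their $\SelfIsER$ values would agree. However, since the bad population remains $o(\N)$ throughout the induction, the probability that a given bad agent's partner is itself bad is only $o(1)$, so this coupling perturbs each conditional survival probability by a multiplicative factor of $1 \pm o(1)$; a stochastic-dominance comparison against a simpler process in which each bad agent dies independently with probability $\gamma/3$ per killing round formalizes this step and closes the gap.
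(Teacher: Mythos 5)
Your overall mechanism is the same one the paper uses: wrong-round agents are culled by the round-consistency check (Algorithm~\ref{alg:gro:round-consistency}) when their $\SelfIsER$ flag disagrees with the majority's, adversarial insertions contribute at most $\K\T$ per epoch, and bad--bad matchings are an $o(1)$ nuisance. Your decomposition differs in form (a per-agent survival analysis over lifetimes, summed over insertion times, plus a dominance/concentration step, versus the paper's epoch-by-epoch induction on the \emph{count} of wrong-round agents, with one Chernoff bound at the majority evaluation round), and that route could be made to work; exploiting both killing rounds per cycle even gives a slightly stronger kill rate than the paper needs.

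However, there is a genuine gap: your premise that ``only the adversary can create wrong-round agents, at a rate of at most $\K$ per round'' is false. A wrong-round agent can \emph{replicate}. At its own evaluation round it survives the consistency check exactly when its partner also has $\SelfIsER=1$, i.e., is another wrong-round agent with the same offset; if both are active with equal colors it then splits with probability $1-16/\sqrt{\N}$, and both daughters inherit the same wrong offset. So the bad population has an internal growth term that your geometric-series accounting over insertion times omits, and your dominating process (each bad agent dies independently with probability $\gamma/3$ per killing round, with no births) does not dominate the true process. The paper addresses this explicitly: each wrong-round agent can split at most once per epoch, and such a split requires a bad--bad match in that round, an event of probability $O(\gamma\, v/\N)$ given the inductive bound $v=O(\N^{1/4})$, so with high probability only $O(\N^{0.01})$ such splits occur per epoch and the induction closes. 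Your own $o(1)$ bad-partner estimate is enough to repair this, but it must be invoked for births as well as for survival, and the dominance comparison adjusted accordingly. A minor slip along the way: in the majority evaluation round \emph{any} two bad agents agree on $\SelfIsER$ (both are $0$), and in a bad agent's own evaluation round it agrees only with bad agents of the \emph{same} offset --- not ``complementary'' offsets --- though this does not change the $o(1)$ estimate you rely on.
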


\begin{proof}
We prove this by induction on the round number. Initially all agents have $\round=0$.
Assume for induction that at round $r$ all but $\gamma^{-1}\N^{1/4}$ of the agents
have variable $\round=0$. We will show that with high probability, the same statement holds
at the start of round $r+\T$. The adversary may add an additional $\K$ agents in each of these
$\T$ rounds, for a total of $\K\T \leq \N^{1/4}/8$ agents.
Note that each agent with the wrong round value may split at most once in this epoch of $\T$ rounds,
when it reaches its evaluation phase. 
If there are $v=O(\N^{1/4})$ agents which differ from the majority value for variable $\round$,
then the probability of such an agent being matched with another such agent in its evaluation phase  
is at most $\gamma \cdot \frac{v}{\Nmin} = \gamma\cdot O(\N^{-3/4})$.
It follows that with high probability, the number of agents with the wrong round value that split 
during this epoch is at most $\N^{0.01}/8<\N^{1/4}/8$. 
Consequently at any point in this epoch of $\T$ rounds, the number of agents with the wrong 
round value will be at most $\gamma^{-1}\N^{1/4} + \N^{1/4}/4$.
Each agent with the wrong round value at the start of the majority evaluation phase 
(i.e. round $r+\T-1$) has probability at least 
$\gamma (\Nmin - (\gamma^{-1}+1/4) \N^{1/4}) / (\Nmin) \geq \gamma - 3\N^{-3/4}$ of being matched with an 
agent starting the evaluation phase,
so with all-but-negligible probability, at most 
$(1/\gamma - 1/2)\N^{1/4}$ of these agents will not be matched with an agent with
different $\round$ value and will survive the round. It follows that at the start of 
round $r+\T$, at most $\gamma^{-1}N^{1/4}$ agents have $\round$ value different from $0$.
Consequently, by induction we have that with overwhelming probability, the number of agents with variable 
$\round \neq 0$ in any round $r\equiv 0 \pmod{\T}$ is at most $\gamma^{-1} \N^{1/4}$.
Since we showed above that the number of agents with the wrong round number
that can be added during the epoch is small with overwhelming probability,
the lemma follows.
\end{proof}

\begin{lemma}
\label{lem:halfactive}
With high probability, if the population is in the interval $[\oma N, \opa N]$ at the start of
an epoch, at any point in the epoch, at most $1/2$ of the agents have $\amActive = 1$. %
\end{lemma}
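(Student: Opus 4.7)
The plan is to bound the number of active agents by decomposing them into three groups: the leaders selected in round~$0$, those recruited via propagation during the recruitment phase, and those seeded by the adversary during the epoch. I would use a Chernoff bound for the first, a ``doubling per subphase'' argument for the second, and the adversarial insertion budget for the third. Throughout, I may assume without loss of generality that $\alpha \leq 1/3$, since the protocol is $\alpha$-independent and a population stability guarantee for any constant $\alpha$ immediately implies the same guarantee for every larger constant.

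First, each of the $m_0 \leq (1+\alpha)N$ agents alive at the start of the epoch independently becomes a leader with probability $1/(8\sqrt{N})$, so the expected number of leaders is at most $(1+\alpha)\sqrt{N}/8 \leq \sqrt{N}/6$. A multiplicative Chernoff bound gives that, with probability $1-\exp(-\Omega(\sqrt{N}))$, the number of leaders $L$ satisfies $L \leq \sqrt{N}/4$.

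Next, the recruitment phase is partitioned into $\frac{1}{2}\log N$ subphases of length $\Tinner$. The key claim is that the number of active agents at most doubles per subphase: each active agent sets $\recruiting := 0$ as soon as it successfully recruits, so within a single subphase every active agent contributes at most one new activation; a case analysis on the two branches of the recruitment subroutine further confirms that inactive agents cannot activate one another. Iterating over the $\frac{1}{2}\log N$ subphases yields at most $L \cdot 2^{(1/2)\log N} = L\sqrt{N} \leq N/4$ active agents at any point during the recruitment phase. Adversarial insertions, even if each one seeds a fresh propagation tree, contribute at most $\maxkill\cdot\T\cdot\sqrt{N} = O(N^{3/4-\epsilon}\log^3 N) = o(N)$ further active agents. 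Summing, the number of active agents at the end of any round is at most $N/4 + o(N)$, while the current population satisfies $m \geq (1-\alpha)N - \maxkill\cdot\T \geq (2/3)N - o(N)$, so $m/2 \geq N/3 - o(N) > N/4 + o(N)$ for $N$ sufficiently large. The evaluation round is handled by noting that its final lines unconditionally set $\amActive := 0$, so the invariant trivially holds at its end.

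\textbf{Main obstacle.} The subtle point is verifying the ``doubling per subphase'' claim in light of the subphase-boundary assignment $\recruiting := 1$ being applied indiscriminately to every agent, including inactive ones. The concern is that inactive agents carrying a stray $\recruiting = 1$ flag might spuriously activate other inactive neighbors. The key observation is that the first branch of the recruitment subroutine fires whenever an agent with $\recruiting = 1$ meets an inactive neighbor, and it clears $\recruiting$ \emph{without} touching $\amActive$. Hence a stray flag on an inactive agent is either cleared in the first round of the subphase (when the agent meets another inactive neighbor), or is ``used up'' by that agent itself being legitimately activated by a true recruiter, or persists harmlessly on an unmatched agent; in none of these cases does an inactive agent propagate activation, so the doubling bound goes through.
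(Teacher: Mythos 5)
Your counting argument is essentially the paper's: bound the number of leaders by a Chernoff bound, cap each leader's activation tree at $2^{\frac{1}{2}\log\N}=\sqrt{\N}$ because an active agent recruits at most once per subphase and there are only $\frac{1}{2}\log\N$ subphases, add $\K\cdot\T\cdot\sqrt{\N}=o(\N)$ for trees seeded by adversarial insertions, and compare against a lower bound on the surviving population. The paper states the per-leader cap without proof and gets the slightly sharper count $m/8+\N^{3/4}$ active agents; your ``doubling per subphase'' paragraph is just the explicit justification of that cap, and your constants work out. One small omission: your population lower bound $(1-\alpha)\N-\K\T$ ignores agents with the \emph{correct} round value that die in the round-consistency check after meeting a wrong-round agent; the paper invokes Lemma~\ref{lem:correctround} to bound these by $O(\gamma^{-1}\N^{1/4})$, which is still $o(\N)$, so your conclusion is unaffected but the justification should include this term.

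The place where your write-up is actually wrong is the resolution of your ``main obstacle.'' Your three cases for an inactive agent carrying a stray $\recruiting=1$ flag are not exhaustive: the flag also survives the first round of a subphase if the agent is unmatched or is matched with an active agent that has already stopped recruiting, and such an agent is \emph{not} harmless in later rounds. If in a later round it meets an inactive agent $B$ whose flag has already been cleared, then $B$ satisfies the second branch of the recruitment subroutine ($\amActive=0$ and $\NbrRec=1$) and becomes active with the stray agent's color --- the first branch does not shield $B$, since $B$'s own $\recruiting$ is $0$. So under the literal pseudocode, in which the subphase-boundary assignment $\recruiting:=1$ is executed by every agent, inactive agents \emph{can} propagate activation whenever $\gamma<1$, and your claim that the doubling bound ``goes through'' by this case analysis does not hold. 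The paper avoids the issue only implicitly: $\recruiting$ is described as meaningful for active agents, and its proof treats activation trees as seeded solely by leaders and adversarial insertions, i.e., it reads the boundary reset as applying to active agents only (equivalently, inactive agents always carry $\recruiting=0$). Under that intended semantics your obstacle vanishes and your argument, like the paper's, is fine; but as a defense of the protocol as literally written, the case analysis has a genuine hole, and the correct fix is to appeal to (or state) the convention that only active agents ever recruit, not to argue that stray flags are innocuous.
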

\begin{proof}
Let $m\in[\oma \N,\opa \N]$ be the population at the start of the epoch.
With all but negligible probability the number of leaders chosen will be 
$m / (8\sqrt{\N}) \pm o(\N^{0.01})$ %
by a Chernoff-Hoeffding bound.
Each leader may induce the activation of at most $\sqrt{\N}$ total agents. 
During the epoch, the adversary may insert an additional 
$\T\cdot \advbound  \leq \N^{1/4}$ %
agents, which may each 
induce the activation of $\sqrt{\N}$ total agents.
Consequently at any point in the epoch, the number of active agents
will be at most $m/8 + \N^{3/4}$. 
Prior to the evaluation step, the adversary can have killed at most 
$\T\cdot\advbound \leq \N^{1/4}$ agents. By the previous lemma, at most
$\gamma^{-1}\N^{1/4}$ agents at the start of the epoch can have the wrong value
for $\round$, and at most $\T\cdot\advbound$ additional such agents can be introduced
during the protocol, so at most $O(\gamma^{-1} \N^{1/4}) = O(\N^{1/4})$ agents 
can be killed in the $\Call{\CheckRoundConsistency}$ procedure. Consequently the population
throughout the epoch until the evaluation phase will be at least $m - O(\N^{1/4})$.
The conclusion follows.
\end{proof}

\begin{lemma}
\label{lem:recruitmentcompletes}
Suppose the population is in the interval $[\oma N, \opa N]$ at the start of
an epoch. Then with high probability, in the last round of the epoch, every active agent entering
the evaluation phase that was not inserted by the adversary during this epoch will have $\torecruit = 0$.
\end{lemma}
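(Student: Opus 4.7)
The plan is to argue that with overwhelming probability every non-adversarial active agent successfully recruits an inactive neighbor in each subphase in which it is supposed to recruit, so that its $\torecruit$ counter is decremented all the way to zero by the last round of the epoch.

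First I would fix any non-adversarial active agent $A$ that has $\recruiting=1$ at the start of some subphase of length $\Tinner=\omega(\log N)$. The adversary can alter at most $\K\T=O(N^{1/4-\epsilon}\log^3 N)=o(N)$ agents during the epoch, so the population remains of size $\Theta(N)$ throughout; by Lemma~\ref{lem:halfactive} at least half of these agents are inactive. Under the random-matching model, it follows that in each round of the subphase $A$ is matched with some inactive agent with probability at least a constant $c=c(\alpha,\gamma)>0$, conditionally on the history so far. Hence the probability that $A$ fails to recruit during the whole subphase is at most $(1-c)^{\Tinner}=N^{-\omega(1)}$, which is negligible.

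Next I would take a union bound over all candidate recruiters and all subphases. Crudely, at most $O(N)$ non-adversarial agents are ever alive during the epoch and there are only $\tfrac{1}{2}\log N$ subphases, so multiplying by the negligible per-recruiter bound keeps the total negligible. Conditioning on this good event, a leader that activates in round $0$ with $\torecruit=\tfrac{1}{2}\log N$ decrements $\torecruit$ in each of the $\tfrac{1}{2}\log N$ subphases and therefore enters the evaluation phase with $\torecruit=0$; similarly, any agent recruited in subphase $k$ initializes $\torecruit=\tfrac{1}{2}\log N-k$ and decrements it in each of the $\tfrac{1}{2}\log N-k$ remaining subphases.

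The main obstacle is handling adversarial interference. The adversary may delete agents mid-epoch, but any deleted agent never reaches the evaluation phase and so falls outside the quantifier of the lemma. The adversary may also insert agents with arbitrary state, including wrong $\round$ values, yet by Lemma~\ref{lem:correctround} only $O(N^{1/4})$ agents ever carry the wrong round value, which perturbs neither the inactive fraction nor the per-round matching probability $c$ above. Finally, a non-adversarial recruiter could in principle be killed by the round-consistency check upon pairing with a wrong-round agent, but any such recruiter is likewise removed before the evaluation phase and so is again not quantified over by the lemma.
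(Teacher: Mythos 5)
Your proof follows essentially the same route as the paper's: invoke Lemma~\ref{lem:halfactive} to get a constant per-round probability of meeting an inactive agent, conclude that each recruiter succeeds within each subphase of $\omega(\log \N)$ rounds with overwhelming probability, and union-bound over the $O(\N)$ recruiters and $\frac{1}{2}\log\N$ subphases. Your additional bookkeeping on the $\torecruit$ decrements and on agents deleted by the adversary or by $\Call{\CheckRoundConsistency}$ (which never reach the evaluation phase and so fall outside the lemma's quantifier) only makes explicit what the paper's terser argument leaves implicit.
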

\begin{proof}
By Lemma \ref{lem:halfactive}, at most 
half of the agents are active in each round, so the probability in each round of encountering an inactive
agent at each round is at least $\gamma/2 = \Theta(1)$. 
With overwhelming probability, in any sequence of $\omega(\log \N)$ steps an agent will encounter
an inactive agent and will be able to recruit it. Applying a union bound, we have that in each cycle of $\Tinner$ steps, 
each of the $O(\N)$ active agents attempting to recruit will be successful in finding an inactive agent to recruit.
Consequently each agent will be able to recruit the desired number of additional agents, and the lemma follows.
\end{proof}

\subsection{Bounded deviation}
\label{ssec:bounded-deviation}

In this section we show that with high probability, the population size does not change by too much in
any single epoch.

\begin{lemma}
\label{lem:agentcounts}
Let $m$ be the population at the start of an epoch. With high probability,
the number of agents with each color $b\in\{0,1\}$ at the start of the evaluation phase
will be $m/16 \pm O(\N^{3/4-\epsilon})$. 
\end{lemma}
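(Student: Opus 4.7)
The plan is to decompose the count of color-$b$ agents at the start of the evaluation phase into contributions from (i) the random selection of leaders, (ii) the random coloring of those leaders, (iii) the deterministic propagation during the recruitment phase, and then to separately account for the adversary's perturbations.

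First I would control the two sources of randomness via Chernoff--Hoeffding bounds. Let $L$ denote the number of leaders chosen in round $0$ of the epoch. Each of the $m$ agents independently sets $\amActive := 1$ with probability $1/(8\sqrt{\N})$, so $L\sim \mathrm{Bin}(m, 1/(8\sqrt{\N}))$ with mean $m/(8\sqrt{\N}) = \Theta(\sqrt{\N})$. A Chernoff bound gives $|L - m/(8\sqrt{\N})| = O(\N^{1/4}\polylog(\N))$ with overwhelming probability. Conditioned on $L$, the number $L_b$ of leaders of color $b$ is $\mathrm{Bin}(L,1/2)$, and a second Chernoff bound gives $|L_b - L/2| = O(\N^{1/4}\polylog(\N))$ with overwhelming probability. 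Combining these, $L_b = m/(16\sqrt{\N}) \pm O(\N^{1/4}\polylog(\N))$.

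Second, I would use the bookkeeping lemmas to argue that each surviving leader produces a full cluster of $\sqrt{\N}$ same-colored agents. By Lemma~\ref{lem:halfactive} the recruitment phase always has at least half of the agents inactive, so by Lemma~\ref{lem:recruitmentcompletes} every active agent not inserted by the adversary enters the evaluation phase with $\torecruit=0$; equivalently, every leader untouched by the adversary has a complete binary recruitment tree of $\sqrt{\N}$ descendants. Multiplying $L_b$ by $\sqrt{\N}$, the count of color-$b$ agents is $m/16 \pm O(\N^{3/4}\polylog(\N))$ from the stochastic effects alone.

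Third, I would bound the adversarial perturbation on this count. The adversary performs at most $\K\cdot\T = O(\N^{1/4-\epsilon}\log^3\N)$ alterations per epoch. The key observation is that a deletion or insertion in subphase $s\in\{0,\dots,\tfrac12\log\N-1\}$ of the recruitment phase disturbs the color-$b$ count by at most $2^{(1/2)\log\N - s} = \sqrt{\N}/2^s$ agents, since an agent alive in subphase $s$ is responsible for inducing the activation of at most $2^{\torecruit} = \sqrt{\N}/2^s$ further agents of its color. Summing $\K$ alterations over the $\Tinner$ rounds of each subphase and geometrically over the $\tfrac12\log\N$ subphases gives total adversarial damage $O(\K\cdot\Tinner\cdot\sqrt{\N}) = O(\N^{3/4-\epsilon}\polylog(\N))$. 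The extra agents killed by $\Call{\CheckRoundConsistency}$ are bounded using Lemma~\ref{lem:correctround} and contribute a lower-order term. Adding all error sources yields the claimed bound $m/16 \pm O(\N^{3/4-\epsilon})$.

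The main obstacle will be step three: cleanly quantifying the effect of a single adversarial action at arbitrary depth of the recruitment tree without double-counting its descendants, while also handling the case where the adversary inserts agents with corrupted recruitment-tree state. The geometric shrinkage of sub-tree size with subphase index is essential: it is exactly what lets a $\K = \N^{1/4-\epsilon}$ per-round budget, amplified by $\sqrt{\N}$ through the tree, stay within the target error bound $\N^{3/4-\epsilon}$ rather than exceeding it by polylog factors.
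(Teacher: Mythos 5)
Your decomposition is the same one the paper uses (Chernoff on the number of leaders of each color, full clusters of size $\sqrt{\N}$ via Lemmas~\ref{lem:halfactive} and~\ref{lem:recruitmentcompletes}, an adversarial budget of $\K$ per round amplified by at most $\sqrt{\N}$ per alteration, plus the round-consistency deaths bounded by Lemma~\ref{lem:correctround}), and your adversarial accounting is fine -- indeed the crude bound $\K\cdot\T\cdot\sqrt{\N}=\tilde{O}(\N^{3/4-\epsilon})$ already suffices, so the geometric-shrinkage refinement by subphase, while correct, is not ``essential'' (your own refined bound still carries $\polylog$ factors through $\Tinner$). The genuine gap is in the final assembly. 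Your own (correct) concentration estimate gives $L_b = m/(16\sqrt{\N}) \pm O(\N^{1/4}\polylog \N)$, and after multiplying by the cluster size $\sqrt{\N}$ this contributes a purely stochastic fluctuation of $\tilde{O}(\N^{3/4})$ to the color-$b$ count. That term \emph{dominates} the adversarial term and is \emph{not} $O(\N^{3/4-\epsilon})$, so the closing sentence ``adding all error sources yields $m/16 \pm O(\N^{3/4-\epsilon})$'' does not follow from anything you proved; what your argument actually delivers is $m/16 \pm \tilde{O}(\N^{3/4})$.

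For comparison, the paper's proof obtains the stated exponent only by asserting that the number of leaders of each color concentrates within $\pm o(\N^{0.01})$ of $m/(16\sqrt{\N})$, i.e.\ by treating the sampling noise as negligible next to the adversarial damage. Since $L_b$ is a Binomial with mean $\Theta(\sqrt{\N})$ and hence standard deviation $\Theta(\N^{1/4})$, your $\tilde{O}(\N^{1/4})$ estimate is the defensible one, and the $\Theta(\N^{3/4})$ fluctuation in the color counts is real. So to produce a sound proof you must either justify a strictly smaller fluctuation (which is not available here) or prove the weaker bound $m/16 \pm \tilde{O}(\N^{3/4})$ and check that it suffices downstream -- which it essentially does: Lemma~\ref{lem:bounded_deviation} only uses $q_0-q_1 = O(\N^{3/4})$ up to factors absorbed in its $\tilde{O}(\sqrt{\N})$ conclusion, and Lemma~\ref{lem:correct_in_exp} only needs the honest clusters to be nearly balanced at the $O(\N^{3/4})$ scale. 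As written, though, your proof silently replaces $\tilde{O}(\N^{3/4})$ by $O(\N^{3/4-\epsilon})$ in the last line, and that step is a non sequitur.
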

\begin{proof}
Let $m$ be the population at the start of the epoch.
With all but negligible probability, the number of leaders selected with color $0$
at the beginning of the epoch will be $m / (16\sqrt{\N}) \pm o(\N^{0.01})$,
and similarly for the number of leaders selected with color $1$. In the absence of
adversarial deletions, each leader will recruit $\sqrt{N}$ followers with the same coin value,
inducing the presence of $m / 16 \pm o(\N^{0.51})$ agents of each color by the final round
of the epoch. 

The adversary may insert or delete $\K\cdot \T = \tilde{O}(\N^{1/4-\epsilon})$ agents over the course of the epoch.
Each inserted agent can induce the activation of at most $\sqrt{\N}$ additional agents,
and similarly each removed agent could have activated up to $\sqrt{\N}$ additional agents.
Additionally, by Lemma \ref{lem:correctround}, at most $O(\N^{1/4})$ agents will
be removed in procedure $\Call{\CheckRoundConsistency}$ upon encountering
an agent with a different value for variable $\round$. 
Overall the actions of the adversary can affect the number of agents of each color by 
at most $O(\N^{3/4 - \epsilon})$. It follows that despite adversarial action,
the number of agents of each color at the start of the 
evaluation phase will be $m/16 \pm O(\N^{3/4-\epsilon})$ with all but negligible probability. 
\end{proof}

\begin{lemma}
\label{lem:bounded_deviation}
With all but negligible probability, if the population is in the interval $[\oma \N, \opa\N]$ 
at the start of an epoch, the population will have deviated by at most $\tilde{O}(\sqrt{N})$ 
by the end of the epoch.
\end{lemma}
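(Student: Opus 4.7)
The plan is to decompose the net population change over the epoch into three sources: (i) the adversarial insertions and deletions, (ii) deaths caused by the round-consistency subroutine upon encountering agents with mismatched round values, and (iii) the splits and deaths performed in the single evaluation round at the end of the epoch. No other changes to the population occur during the leader-selection and recruitment phases. The adversary can perform at most $\K\T = \tilde{O}(\N^{1/4-\epsilon})$ alterations in total, and by Lemma~\ref{lem:correctround} the total number of deaths due to round-mismatch is $O(\N^{1/4})$; both are comfortably $\tilde{O}(\sqrt{\N})$. Thus the task reduces to bounding the net effect of the evaluation round.

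For that round, I would condition on the coloring of agents and on all earlier randomness, restricting attention to the high-probability event guaranteed by Lemma~\ref{lem:agentcounts} that $n_0$ and $n_1$, the numbers of active agents of each color entering the evaluation phase, are each $m/16 \pm O(\N^{3/4-\epsilon})$. The random matching in this round is independent of this conditioning. Write $Z$ for the net change in population in this round. Then $Z = \sum_i X_i$, summed over matched pairs in which both endpoints are active; if the two agents in a pair have different colors then $X_i = -2$ deterministically, while if they share a color then $X_i = 2 - B_i^{(1)} - B_i^{(2)}$, where $B_i^{(1)},B_i^{(2)}$ are independent $\mathrm{Bernoulli}(16/\sqrt{\N})$ no-split indicators. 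Agents matched with an inactive agent or with no neighbor contribute $0$ to $Z$.

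To control $E[Z]$, I would observe that for two distinct active agents drawn uniformly at random, $|\Pr[\text{same color}] - 1/2| = O(1/\sqrt{\N})$: this follows from the cluster structure described in Section~\ref{ssec:tech-overview}, since the probability that two random active agents come from the same cluster is $\Theta(\sqrt{\N}/m)$ and colors of different clusters are independent. Combined with the $1 - 16/\sqrt{\N}$ splitting probability this gives $|E[X_i]| = O(1/\sqrt{\N})$, and since the number of matched active pairs is at most $(n_0 + n_1)/2 = O(\N)$, we obtain $|E[Z]| = O(\sqrt{\N})$.

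The remaining and main technical step is concentration of $Z$ around its mean, complicated by the fact that the $X_i$ are not independent: the random matching induces correlations between pairs. I would handle this by applying McDiarmid's bounded differences inequality to $Z$ viewed as a function of the uniformly random matching together with the independent coin flips $B_i^{(j)}$. Re-pairing two vertices (a single swap in the matching) or flipping a single Bernoulli variable changes $Z$ by at most a constant, and since the number of coordinates is $O(\N)$ this yields $|Z - E[Z]| = O(\sqrt{\N \log \N})$ with all but negligible probability. Combining this with the earlier bounds on (i), (ii), and $|E[Z]|$ establishes $|Z| = \tilde{O}(\sqrt{\N})$, proving the lemma. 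The concentration step is the chief obstacle, since standard Chernoff/Hoeffding bounds assume independence; the bounded-differences approach on random matchings is the natural substitute.
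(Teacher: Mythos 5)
Your overall decomposition matches the paper's: both proofs charge the adversary's $\K\T=\tilde{O}(\N^{1/4-\epsilon})$ alterations and the $O(\N^{1/4})$ round-consistency deaths separately, condition on the color-count event of Lemma~\ref{lem:agentcounts}, and reduce the lemma to the net effect of the single evaluation round. Where you genuinely diverge is in how that round is analyzed. The paper never argues via an expectation-plus-concentration bound on $Z$: it samples the random matching as $q=\Omega(\N)$ left vertices paired with $q$ right vertices, shows each side contains $q_b\pm\tilde{O}(\sqrt{\N})$ vertices of color $b$, concentrates the numbers of same-color and cross-color matched pairs around $q_b^2/q$ and $q_bq_{1-b}/q$, and reads off the net change as $\frac{2}{q}(q_0-q_1)^2\pm\tilde{O}(\sqrt{\N})=\tilde{O}(\sqrt{\N})$, with all randomness handled by elementary Chernoff--Hoeffding/hypergeometric bounds. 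You instead bound $\left|\mathbb{E}[Z]\right|=O(\sqrt{\N})$ from pairwise marginals and then apply a bounded-differences inequality to the random matching; this route works (the transposition-Lipschitz version of McDiarmid for permutations, or an Azuma argument revealing the matching pair by pair, gives the needed tail), and it is more modular, at the price of requiring concentration machinery for a non-product space that the paper's direct counting avoids. Three points to tighten: (1) do not justify $\left|\Pr[\text{same color}]-\tfrac12\right|=O(1/\sqrt{\N})$ by independence of cluster colors --- the adversary can bias cluster colors by inserting or deleting leaders, which is precisely the subtlety Lemma~\ref{lem:correct_in_exp} must work around --- but since you have already conditioned on $n_0,n_1=m/16\pm O(\N^{3/4-\epsilon})$, the bound follows from these counts alone, as the deviation is $(n_0-n_1)^2/(n_0+n_1)^2+O(1/(n_0+n_1))=o(\N^{-1/2})$; (2) a deviation of $O(\sqrt{\N\log\N})$ yields only polynomially small failure probability, so take the deviation to be $\sqrt{\N}\cdot\mathrm{polylog}(\N)$ (still $\tilde{O}(\sqrt{\N})$) to make the tail negligible; (3) your claim that nothing else changes the population during recruitment overlooks wrong-round agents reaching their own evaluation round mid-epoch, but by Lemma~\ref{lem:correctround} these contribute only $O(\N^{1/4})$ and can be absorbed into your first two terms. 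None of these affects the conclusion.
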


\begin{proof}
Let $m\in[\oma \N,\opa \N]$ be the population at the start of the epoch.
By the previous lemma, at the start of the evaluation phase the number of agents with color $0$
will be $m_0 = m/16 \pm O(\N^{3/4-\epsilon})$ with all but negligible probability,  
and likewise the number of agents with color $1$
will be $m_1 = m/16 \pm O(\N^{3/4-\epsilon})$. We condition on these events. 
Since the adversary can insert at most $\K$ agents in each of the $\T=\log^3 \N$ rounds
for a total of $\tilde{O}(\N^{1/4-\epsilon})$, 
and no other new agents with the correct value of variable $\round$ can be produced
until the evaluation phase, Lemma \ref{lem:correctround} implies that
the total number of agents at the start of the evaluation phase is 
at most $m + O(\gamma^{-1} \N^{1/4})$.
Similarly, since the adversary can have directly removed at most $\K\cdot \T =\K \log^3 \N$ agents,
and at most $2((\gamma^{-1}+1)N^{1/4} + \K\log^3 \N)$ agents with the correct value of
$\round$ may have been removed as a result of procedure $\Call{\CheckRoundConsistency}$
after encountering an agent with a different value of $\round$,
it follows that for $\gamma=\Theta(1)$, the total number of agents at the start of the evaluation phase is
$m \pm O(\N^{1/4})$.

Consequently the communication graph for the evaluation phase is 
a random matching of size $q=\Omega(\gamma \N)=\Omega(\N)$.
Sample such a matching
by first choosing a set of $q$ left vertices
and a set of $q$ right vertices, and then associating corresponding vertices on the 
left and right. 
Let $q_0 = qm_0/m'$ and $q_1 = qm_0/m'$.
With high probability the number of left (respectively, right) vertices with color $0$ 
will be %
$q_0 \pm \tilde{O}(\sqrt{\N})$,
and similarly for vertices of color $1$. 

It follows that with all but negligible probability, the number of left-vertices of color
$b$ that split after being matched with a right-vertex of color $b$ is
$q_b^2/q \pm \tilde{O}(\sqrt{\N})$ for each $b\in \{0,1\}$.
Similarly, the number of left-vertices of color $b$ that self-destruct after 
being matched with a right-vertex of color $1-b$ is
$q_bq_{1-b}/q \pm \tilde{O}(\sqrt{\N})$.
Consequently will all but negligible probability,
noting that $q_0-q_1 = O(\N^{3/4})$, we have that
the change in population during the evaluation phase is
$$\frac{2}{q}\cdot \left(q_0^2 + q_1^2 - 2q_0q_1\right) \pm \tilde{O}(\sqrt{\N})
= \tilde{O}(\sqrt{\N})$$
as desired.

\end{proof}

\subsection{Correcting population drift} 
\label{ssec:correcting-drift}

In this section we show that if the population has drifted too far from the target value $N$,
in expectation it will tend to correct itself. 

\begin{lemma}
\label{lem:correct_in_exp}
If the population is in the interval $[\oma\N, \omha\N]$ at the start of an epoch,
then for any adversarial strategy, 
in expectation the population will increase by $\Omega(\sqrt{\N})$ by the
end of the epoch.
If the population is in the interval $[\opha N, \opa N]$ at the start of an epoch,
then
in expectation the population will decrease by at least $\Omega(\sqrt{\N})$ by the
end of the epoch.
\end{lemma}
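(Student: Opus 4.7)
The plan is to reduce the claim to analyzing the expected change during the single evaluation round of the epoch: by Lemmas~\ref{lem:correctround}, \ref{lem:halfactive}, and \ref{lem:recruitmentcompletes}, during the other rounds the only population changes come from the adversary's $\K\T = \tilde{O}(\N^{1/4-\epsilon})$ alterations and the $O(\N^{1/4})$ deaths triggered by $\Call{\CheckRoundConsistency}$, contributing a total of $o(\sqrt{\N})$ with overwhelming probability. Throughout the argument I condition on the high-probability events in those lemmas; the residual contribution to the expectation from their failure is negligible, since any agent can split at most once per epoch so the population can at most double over the epoch.

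For the evaluation round, let $m_0, m_1$ denote the numbers of color-$0$ and color-$1$ active agents entering the round, let $m' = m \pm \tilde{O}(\N^{1/4})$ denote the total population, and let $q = \gamma m'/2$ be the number of matched pairs. In a uniform random matching, linearity of expectation gives approximately $q\, m_b^2/m'^2$ same-color-$b$ pairs and $2q\, m_0 m_1/m'^2$ oppositely-colored pairs. Each same-color pair contributes $2(1-\delta)$ new agents in expectation since each member independently splits with probability $1-\delta$ where $\delta := 16/\sqrt{\N}$, while each oppositely-colored pair loses both of its members. Collecting terms:
\begin{equation*}
\mathbb{E}\bigl[\Delta_{\mathrm{eval}} \bigm| m_0, m_1, m', q\bigr] \;=\; \frac{2q}{m'^2}\Bigl[(m_0-m_1)^2 \;-\; \delta\,(m_0^2 + m_1^2)\Bigr].
\end{equation*}
The first bracketed term is the ``variance signal'' that encodes the population size through the coloring process, while the second is the constant negative pressure from the biased split coin.

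Taking the outer expectation hinges on a lower bound for $\mathbb{E}[(m_0-m_1)^2]$ and an upper bound for $\mathbb{E}[m_0^2 + m_1^2]$. Absent the adversary, each of the $m$ starting agents is independently a color-$b$ leader with probability $1/(16\sqrt{\N})$ and, by Lemma~\ref{lem:recruitmentcompletes}, each leader seeds a cluster of $\sqrt{\N}$ agents of its color. Hence $m_0 - m_1 = \sqrt{\N}\sum_i X_i$ for i.i.d.\ $\{-1,0,+1\}$-valued $X_i$ with per-coordinate variance $1/(8\sqrt{\N})$, giving $\mathrm{Var}(m_0 - m_1) = m\sqrt{\N}/8$. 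Each adversarial action shifts $m_0 - m_1$ by at most $\sqrt{\N}$ (by attacking or planting a leader), so the aggregate adversarial perturbation $D_a$ satisfies $|D_a| \leq \K\T\sqrt{\N} = \tilde{O}(\N^{3/4-\epsilon})$, strictly dominated by the honest standard deviation $\Theta(\N^{3/4})$. Writing $m_0 - m_1 = D_h + D_a$ and applying the Cauchy--Schwarz-style estimate $\mathbb{E}[(D_h+D_a)^2] \geq \mathbb{E}[D_h^2] - 2\sqrt{\mathbb{E}[D_h^2]\,\mathbb{E}[D_a^2]}$ yields $\mathbb{E}[(m_0-m_1)^2] \geq (1-o(1))\,m\sqrt{\N}/8$. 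Meanwhile $m_0 + m_1$ concentrates to $m/8$ up to additive $\tilde{O}(\N^{3/4-\epsilon})$, so $\mathbb{E}[m_0^2+m_1^2] \leq (1+o(1))\,m^2/128$. Substituting $\delta = 16/\sqrt{\N}$, $q \approx \gamma m'/2$, and $m' \approx m$ collapses the two leading terms into a single factor of $(1 - m/\N)$:
\begin{equation*}
\mathbb{E}[\Delta_{\mathrm{eval}}] \;=\; \tfrac{\gamma\sqrt{\N}}{8}\,\bigl(1 - m/\N \;\pm\; o(1)\bigr),
\end{equation*}
which is $+\Omega(\sqrt{\N})$ when $m \le (1-\alpha/2)\N$ and $-\Omega(\sqrt{\N})$ when $m \ge (1+\alpha/2)\N$, exactly as required.

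The principal obstacle is the variance lower bound in the previous paragraph: because the adversary is adaptive and sees the random leader colors, I must rule out strategies that spend the entire $\tilde{O}(\N^{3/4-\epsilon})$ budget on ``variance cancellation'' (for instance by deleting whichever colored leaders happen to be in the majority after the leader-selection round). The Cauchy--Schwarz decomposition above handles this cleanly precisely because $\K = O(\N^{1/4-\epsilon})$ forces the adversary's total leverage to be a $\N^{-\epsilon}$-factor below the honest standard deviation; any larger $\K$ would allow the adversary to match the honest fluctuations and nullify the entire correction signal on which the protocol relies.
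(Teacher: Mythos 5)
Your proposal is correct in substance but takes a genuinely different route from the paper. The paper never works with the global color counts: it first argues that any adversary can be emulated by one that defers all deletions of colored, correct-round agents to the start of the evaluation phase, uses the resulting exchangeability of full clusters to pre-designate a bounded set of ``adversarial'' clusters of each color, and then computes the expected change pair by pair, conditioning on cluster membership (same honest cluster implies equal colors, distinct honest clusters give independent colors), with crude separate bounds for honest--adversarial and adversarial--adversarial pairs. You instead make the ``population size is encoded in the variance'' intuition literal: conditioned on the configuration entering the evaluation round, the expected change is $\frac{2q}{m'^2}\bigl[(m_0-m_1)^2-\delta(m_0^2+m_1^2)\bigr]$, and the adaptive, coin-observing adversary is absorbed by the decomposition $m_0-m_1=D_h+D_a$ plus Cauchy--Schwarz, which requires no independence between the honest fluctuation and the adversary's color-aware perturbation. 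This is arguably cleaner than the deferral/exchangeability construction and isolates exactly why $\K=O(\N^{1/4-\epsilon})$ is the right threshold; what the paper's per-pair formulation buys is that it never needs a lower bound on the second moment of a quantity partially under adversarial control, since the adversarially influenced clusters are simply carved out and bounded crudely.

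Two points need patching. First, the pointwise bound $|D_a|\le \K\T\sqrt{\N}$ is incomplete: colored agents also die in the round-consistency check upon meeting wrong-round agents, and by Lemma~\ref{lem:correctround} up to $\Theta(\N^{1/4})$ such wrong-round agents can be carried over from earlier epochs, outside this epoch's $\K\T$ budget. Charging each such death the worst-case $\sqrt{\N}$ truncation of its recruitment subtree gives $\Theta(\N^{3/4})$, the same order as $\sqrt{\mathbb{E}[D_h^2]}$, which would destroy the $(1-o(1))$ factor in your Cauchy--Schwarz step. You must either use the fact that these deaths occur at matching-random positions (independent of color, and a random colored agent has only an $O(\log \N)$-size remaining subtree, so their aggregate effect is far below $\N^{3/4}$), or lean explicitly on the accounting already asserted in Lemma~\ref{lem:agentcounts}; also define $D_h$ concretely (e.g.\ via the leader coins of agents alive at the selection round, with phantom coins for agents deleted before flipping) so that $D_a$ is well-defined on the same probability space. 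Second, $\mathbb{E}[m_0^2+m_1^2]\le(1+o(1))\,m^2/128$ does not follow from concentration of the sum $m_0+m_1$ alone (a sum of $m/8$ is consistent with $m_0^2+m_1^2$ as large as $m^2/64$); either invoke the per-color concentration of Lemma~\ref{lem:agentcounts}, or use $m_0^2+m_1^2=\tfrac12\bigl[(m_0+m_1)^2+(m_0-m_1)^2\bigr]$ together with the matching upper bound $\mathbb{E}[(m_0-m_1)^2]\le(1+o(1))\,m\sqrt{\N}/8=o(m^2)$, which your decomposition also yields and which you need anyway for the overpopulated case.
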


\begin{proof}
The behavior of the system during the evaluation phase depends on the distribution of 
coin values of active agents in this phase. We would like to argue that each pair of clusters 
has its colors assigned by independent, fair coin flips. 
This is clearly true in the absence of an adversary. However, in our setting, adversarial 
instertions and deletions can bias the joint distribution of the colors of a pair of agents in
different clusters.\footnote{For instance, in the first round of the epoch, 
the adversary can instert additional leaders
all with color $0$, or can delete several leaders that have color $1$.
This difficulty arises because we allow the adversary to observe the internal memory of all
agents, including the results of coin tosses.}
Nonetheless, we will argue that the adversary's influence is limited, and that for most pairs of agents
in different clusters, we can think of the joint distribution of colors as unbiased and uniform.
We will label clusters as \emph{honest} or \emph{adversarial},
where the colors of a pair of honest clusters can be regarded as independently sampled,
and no assumption is made on the colors of the adversarial clusters.

Consider any fixed adversarial strategy. Note that the adversary can insert or delete a total of no more than 
$2\cdot \T\cdot\K = \tilde{O}(\N^{1/4-\epsilon})$ agents during the epoch, and consequently
can influence no more than this many clusters. 
Any strategy of the adversary during this epoch
can be emulated by deferring any deletions
of colored agents (with the correct $\round$ value) to the beginning of the evaluation phase, 
deleting instead an inactive agent. 
Recall that  by Lemma~\ref{lem:recruitmentcompletes}, each cluster not affected by the adversary
will consist of $\sqrt{\N}$ agents at the beginning of the evaluation phase.
At the beginning of the evaluation phase, we allow this
new adversary not only to delete the specified agent, but also to set $\amActive=0$ for any
subset of the $\sqrt{\N}-1$ other agents in the cluster. Note that any attack that could be accomplished
by the original adversary can still be carried out by this new adversary that defers all of its
deletions of colored agents to the evaluation phase but is subsequently allowed to modify up to 
$\tilde{O}(\N^{3/4-\epsilon})$ agents in $\tilde{O}(\N^{1/4-\epsilon})$ different clusters.

Since we now defer deletions of colored agents to the beginning of the evaluation phase,
each cluster induced by a leader selected in the first round of the epoch will have the full
$\sqrt{\N}$ members, and consequently these clusters are indistinguishable except for their color. 
Consequently as long as the adversary can modify agents in $\tilde{O}(\N^{1/4-\epsilon})$ clusters
of each color, \emph{which} specific cluster of each color is irrelevant.
Consider an arbitrary indexing of the agents before the first round of the epoch, and consider the
first $\tilde{O}(\N^{1/4-\frac{1}{2}\epsilon})$ leaders chosen in this round. 
With all but negligible probability, this set will contain at least $\tilde{O}(\N^{1/4-\epsilon})$
agents of each color $\mycolor\in\binset$, and so the strategy of the adversary can be carried
out by manipulating only the clusters of agents in this set. 
Let the clusters induced by these agents be the \emph{adversarial clusters} along with the clusters
induced by any agents inserted by the adversary, and let the remaining clusters be the 
\emph{honest clusters}. 

Now we have reduced to a setting in which we have achieved the 
desired property that each honest cluster has size $\sqrt{\N}$, and
the coin flips of any two honest clusters are independent. However, we are now dealing
with a modified adversary that can affect a larger overall number of agents.
Let $m'$ be the number of agents at the start of the evaluation phase.
By Lemma~\ref{lem:agentcounts}, it follows that with all but negligible probability,
the number of agents in honest clusters is $m_h = m'/8 \pm O(\N^{3/4-\frac{1}{2}\epsilon})$, and the
number of agents in clusters influenced by the adversary is $m_a = O(\N^{3/4-\frac{1}{2}\epsilon})$. 

Pick a random pair of vertices at the start of the evaluation phase. Then with probability
$1/64 - O(\N^{3/4}/m') = \Omega(1)$ both agents will belong to honest clusters, with probability 
$o(\N^{3/4}/m')$ one agent will belong to an honest cluster and the other to an adversarial cluster,
and with probability $o(\N^{3/2}/m'^2)$ both will belong to adversarial clusters.

A pair of vertices belonging to honest clusters will have the same color if they belong to the
same cluster, and independently random colors if they belong to different clusters. Consequently
the probability that such a pair of vertices will have the same color is
$\frac{1}{2} + \frac{\sqrt{\N}}{2m_h}-O(\frac{1}{m_h})$. 
It follows that the expected change in population resulting
from the matching of a pair of vertices belonging to honest clusters is
$$\left(1+\frac{\sqrt{\N}}{m_h}\right) \cdot \left(1-\frac{16}{\sqrt{\N}}\right) - \left(1-\frac{\sqrt{\N}}{m_h}\right)
-O\left(\frac{1}{m_h}\right)
= \frac{2\sqrt{\N}}{m_h} - \frac{16}{\sqrt{\N}} - \frac{O(1)}{m_h}\,.$$

Recalling that $\alpha\leq 1/2$ is a fixed constant,
for $m < \omha \N$ %
this quantity is $\Theta(1/\sqrt{\N})$, and for
$m > \opha \N$ this quantity is negative, with magnitude $\Theta(1/\sqrt{\N})$. 

The honest clusters consist of nearly the same number of agents of each color 
($m/16 \pm O(\N^{3/4})$ of each). It follows that the expected change in population resulting from
the matching of a vertex in an honest cluster and a vertex in an adversarial cluster has magnitude
$O(\N^{-1/4})$. Making no assumption about the distribution of colors in the adversarial clusters,
the matching of two vertices in an adversarial cluster can have a change in population of $O(1)$.  

Consider a random pair of vertices at the start of the evaluation phase.
For $m < \omha \N$, the expected change in population resulting from matching this pair of vertices is 
$\Omega(\N^{-1/2}) - o(\N^{-1/4}\N^{-1/4}) - o(\N^{-1/2}\cdot 1) = \Omega(\N^{-1/2})$.
Since the number of matched pairs of vertices in the evaluation phase is $\gamma m' = \Theta(\N)$,
it follows from linearity of expectation that the expected change in population 
during the evaluation phase is $\Omega(\sqrt{\N})$. 
Similarly, for  $m > \opha \N$ we have that the expected change in population
during the evaluation phase is $-\Omega(\sqrt{\N})$. 
The adversary can delete or insert only $\K \cdot \T = o(N^{1/4})$ agents during the epoch,
and Lemma~\ref{lem:correctround} implies that $O(N^{1/4})$ agents will self-destruct during procedure
$\Call{\CheckRoundConsistency}$ during the epoch, so the other terms dominate,
and the conclusion follows.
\end{proof}

\subsection{Putting everything together}
\label{ssec:finish-the-proof}

We now show that if the population size leaves the interval $[\omha\N, \opha\N]$
during an epoch, with high probability it will return to this interval during one of the next few epochs.
With this final lemma, we then conclude the proof of the theorem.

\begin{lemma}
\label{lem:combine}
Consider an epoch in which the population has 
drifted outside the interval $[\omha\N, \opha\N]$. With all but negligible probability,
the population will once again be in the interval $[\omha\N, \opha\N]$ at the start of one of the next
$t=\N^{0.01}$ epochs. 
\end{lemma}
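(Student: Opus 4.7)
The approach is a standard supermartingale / stopping-time argument, using Lemmas~\ref{lem:bounded_deviation} and~\ref{lem:correct_in_exp} as the only nontrivial ingredients. By symmetry I may assume the population first exited $[\omha\N,\opha\N]$ from above, so that by Lemma~\ref{lem:bounded_deviation} the population $X_1$ at the start of the very next epoch lies in $[\opha\N,\,\opha\N+\tilde{O}(\sqrt{\N})]$ with overwhelming probability. Let $X_i$ denote the population at the start of the $i$th subsequent epoch, and define the stopping times
\[
\tau_{\mathrm{good}}=\min\{i\leq t:X_i\leq \opha\N\},\quad
\tau_{\mathrm{bad}}=\min\{i\leq t:X_i>\opa\N\},\quad
\tau=\min(\tau_{\mathrm{good}},\tau_{\mathrm{bad}},t),
\]
with the convention $\min\emptyset=+\infty$. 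The goal is to show $\tau=\tau_{\mathrm{good}}$ with all but negligible probability.

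For every $i<\tau$ we have $X_i\in(\opha\N,\opa\N]$, so Lemma~\ref{lem:correct_in_exp} gives a constant $c>0$ with $\mathbb{E}[X_{i+1}-X_i\mid X_1,\ldots,X_i]\leq -c\sqrt{\N}$, crucially against every adaptive adversarial strategy. I will analyze the stopped process $Y_i := X_{i\wedge\tau}+c\,(i\wedge\tau)\sqrt{\N}$, which is a supermartingale: for $i<\tau$ the increment is $X_{i+1}-X_i+c\sqrt{\N}$ with nonpositive conditional expectation, and for $i\geq\tau$ it is identically $0$. Lemma~\ref{lem:bounded_deviation} bounds per-epoch increments by $\tilde{O}(\sqrt{\N})$ with overwhelming probability; union-bounding this bad event over the $t=\N^{0.01}$ epochs, Azuma--Hoeffding then yields
\[
\Pr\bigl[Y_t - Y_1 \geq s\bigr] \;\leq\; \exp\!\Bigl(-\Omega\bigl(s^2/(t\cdot\tilde{O}(\N))\bigr)\Bigr)
\]
for every $s>0$. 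Taking $s=\tfrac{1}{2}ct\sqrt{\N}=\Omega(\N^{0.51})$, this probability becomes $\exp(-\tilde{\Omega}(\N^{0.01}))$, which is negligible in $\N$.

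It remains to case-split on $\tau$ on the complement of the exceptional events, where $Y_t\leq Y_1+\tfrac{1}{2}ct\sqrt{\N}\leq \opha\N+\tilde{O}(\sqrt{\N})+\tfrac{1}{2}ct\sqrt{\N}$. If $\tau=t$ then $Y_t=X_t+ct\sqrt{\N}$ with $X_t>\opha\N$ (otherwise $\tau_{\mathrm{good}}\leq t$), forcing $\tfrac{1}{2}ct\sqrt{\N}\leq \tilde{O}(\sqrt{\N})$, a contradiction for $t=\N^{0.01}$. If $\tau=\tau_{\mathrm{bad}}\leq t$ then $Y_t=X_{\tau_{\mathrm{bad}}}+c\tau_{\mathrm{bad}}\sqrt{\N}>\opa\N$, forcing $\alpha\N/2\leq \tilde{O}(\sqrt{\N})+\tfrac{1}{2}ct\sqrt{\N}=o(\N)$, again a contradiction. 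Hence $\tau=\tau_{\mathrm{good}}\leq t$ as desired, and the symmetric case (first exit below $\omha\N$) is handled identically with signs reversed.

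The main subtlety — really the only place there is anything to verify beyond routine computation — is that Lemma~\ref{lem:correct_in_exp} gives the conditional expectation bound against an \emph{adaptive} adversary, which is exactly what is needed to make $Y_i$ a bona-fide supermartingale with respect to the natural filtration that includes the adversary's past actions. Once that is granted, the remaining Azuma--Hoeffding calculation and union bound are standard.
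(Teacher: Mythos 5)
Your proof is correct and rests on the same two ingredients as the paper's --- the per-epoch deviation bound of Lemma~\ref{lem:bounded_deviation} and the $\Omega(\sqrt{\N})$ expected drift of Lemma~\ref{lem:correct_in_exp}, accumulated over $t=\N^{0.01}$ epochs and compared against the $\tilde{O}(\sqrt{\N}\cdot t)$ noise --- but you package the argument differently. The paper argues by contradiction: assuming the population stays outside $[\omha\N,\opha\N]$ for all $t$ epochs, it applies a Chernoff--Hoeffding bound to the average per-epoch change $\overline{X}$ to conclude the cumulative change is $\Omega(\N^{0.51})$, forcing the population back into the interval. You instead define stopping times and a stopped supermartingale $Y_i=X_{i\wedge\tau}+c(i\wedge\tau)\sqrt{\N}$ and apply Azuma--Hoeffding. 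Your route buys two things the paper leaves implicit: (i) the epoch-to-epoch increments are not independent (the adversary is adaptive), so the concentration step really is an Azuma-type bound on a supermartingale with respect to the full history, which you state explicitly while the paper's invocation of ``Chernoff--Hoeffding'' elides it; and (ii) Lemma~\ref{lem:correct_in_exp} only applies while the population stays within $[\oma\N,\opa\N]$, and your $\tau_{\mathrm{bad}}$ case explicitly rules out overshooting past $\opa\N$, whereas the paper handles this only implicitly via the per-epoch deviation bound. The remaining informality in your write-up --- increments bounded by $\tilde{O}(\sqrt{\N})$ only with overwhelming probability, handled by a union bound or a truncated process --- is present at the same level in the paper's proof, so it is not a gap relative to the paper's standard of rigor. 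One cosmetic point: $X_1$ need not lie above $\opha\N$ (Lemma~\ref{lem:bounded_deviation} gives only the upper bound $\opha\N+\tilde{O}(\sqrt{\N})$), but if it is already at most $\opha\N$ you are done immediately, so this does not affect the argument.
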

\begin{proof}
Assume to the contrary, and let epoch 0 denote the first epoch after the population
has exceeded the interval $[\omha\N, \opha\N]$.
For concreteness, suppose the population has dropped below $\omha\N$. 
By Lemma \ref{lem:bounded_deviation}, with all but negligible probability
the population is still above
$\omha\N - \tilde{O}(\sqrt{\N})$.
For $i\in\{1,\ldots,t\}$, let $X_i$ be the random variable denoting
the difference in population between the start of epoch $i$ and the start
of epoch $i-1$, and let $\overline{X} =  (X_1+\cdots+X_t)/t$.
By Lemma \ref{lem:bounded_deviation}, with all but negligible probability,
each random variable $X_i$ is bounded in the range $[-\tilde{O}(\sqrt{\N}), \tilde{O}(\sqrt{\N})]$. 
Since by assumption the population is below $\omha\N$ at the start of each epoch,
Lemma \ref{lem:correct_in_exp} implies that $\mathbb{E}[X_i] = \Omega(\sqrt{\N})$. 
It follows by a Chernoff-Hoeffding bound that with all but negligible probability, for any constant $c$,
$\left|\overline{X} - \mathbb{E}[\overline{X}]\right| \leq c\sqrt{\N}$, so with all
but negligible probability we have that $\overline{X} = \Omega{\sqrt{\N}}$
and $X_1+\cdots+X_t = \Omega(\N^{0.51})$.
Consequently the population at the start of epoch $t$ exceeds $\omha\N$.
The argument is identical when the population has exceeded $\opha\N$. 
\end{proof}

We now conclude the proof of Theorem \ref{thm:mainthm-restated}. 
\begin{proof}[Proof of Theorem \ref{thm:mainthm-restated}]
Consider any polynomial $f$, and suppose that for some adversarial strategy
the population deviates from the interval $[(1-\alpha)\N, (1+\alpha)\N]$ 
in $f(\N)$ rounds with non-negligible probability. It follows that for some pair of epochs 
$i<j\in \left[1+\lfloor f(\N)/\T \rfloor \right]$,
with non-negligible probability the population deviates from interval $[(1-\alpha)\N, (1+\alpha)\N]$
for the first time in epoch $j$
after deviating from the interval $[\omha\N, \opha\N]$ for the last time in epoch $i$. We condition on this event.
Lemma \ref{lem:bounded_deviation} implies that until epoch $j$,
with all but neligible probability
the population will deviate in each epoch by at most $\tilde{O}(\sqrt{\N})$.
But then Lemma \ref{lem:combine} implies that with all but negligible 
probability the population will return to interval $[\omha\N, \opha\N]$
within $\N^{0.01}$ epochs, which is a contradiction.
Consequently the population will remain between $(1-\alpha)\N$ and $(1+\alpha)\N$
with high probability for any polynomial number of rounds, as desired.
\end{proof}

\bibliography{biblio.bib}

\newcommand{\etalchar}[1]{$^{#1}$}
\begin{thebibliography}{DMPTH10}

\bibitem[AAD{\etalchar{+}}04]{AADFP04}
Dana Angluin, James Aspnes, Zo{\"{e}} Diamadi, Michael~J. Fischer, and
  Ren{\'{e}} Peralta.
\newblock Computation in networks of passively mobile finite-state sensors.
\newblock In {\em Proceedings of the Twenty-Third Annual {ACM} Symposium on
  Principles of Distributed Computing, {PODC} 2004, St. John's, Newfoundland,
  Canada, July 25-28, 2004}, pages 290--299, 2004.

\bibitem[AAD{\etalchar{+}}06]{AADFP06}
Dana Angluin, James Aspnes, Zo{\"{e}} Diamadi, Michael~J. Fischer, and
  Ren{\'{e}} Peralta.
\newblock Computation in networks of passively mobile finite-state sensors.
\newblock {\em Distributed Computing}, 18(4):235--253, 2006.

\bibitem[AAE07]{AAE07}
Dana Angluin, James Aspnes, and David Eisenstat.
\newblock A simple population protocol for fast robust approximate majority.
\newblock In {\em Distributed Computing, 21st International Symposium, {DISC}
  2007, Lemesos, Cyprus, September 24-26, 2007, Proceedings}, pages 20--32,
  2007.

\bibitem[AAE{\etalchar{+}}17]{AAEGR17}
Dan Alistarh, James Aspnes, David Eisenstat, Rati Gelashvili, and Ronald~L.
  Rivest.
\newblock Time-space trade-offs in population protocols.
\newblock In {\em Proceedings of the Twenty-Eighth Annual {ACM-SIAM} Symposium
  on Discrete Algorithms, {SODA} 2017, Barcelona, Spain, Hotel Porta Fira,
  January 16-19}, pages 2560--2579, 2017.

\bibitem[AAER07]{AAER07}
Dana Angluin, James Aspnes, David Eisenstat, and Eric Ruppert.
\newblock The computational power of population protocols.
\newblock {\em Distributed Computing}, 20(4):279--304, 2007.

\bibitem[AAFJ08]{AAFJ08}
Dana Angluin, James Aspnes, Michael~J. Fischer, and Hong Jiang.
\newblock Self-stabilizing population protocols.
\newblock {\em {TAAS}}, 3(4):13:1--13:28, 2008.

\bibitem[AAG18]{AAG18}
Dan Alistarh, James Aspnes, and Rati Gelashvili.
\newblock Space-optimal majority in population protocols.
\newblock In {\em Proceedings of the Twenty-Ninth Annual ACM-SIAM Symposium on
  Discrete Algorithms}, pages 2221--2239. SIAM, 2018.

\bibitem[ABBS16]{ABBS16}
James Aspnes, Joffroy Beauquier, Janna Burman, and Devan Sohier.
\newblock Time and space optimal counting in population protocols.
\newblock In {\em 20th International Conference on Principles of Distributed
  Systems, {OPODIS} 2016, December 13-16, 2016, Madrid, Spain}, pages
  13:1--13:17, 2016.

\bibitem[BCG04]{BCG}
E.~R Berlekamp, John~Horton Conway, and R.~K. Guy.
\newblock Winning ways for your mathematical plays.
\newblock In {\em A K Peters 1st Ed.}, 2001-2004.

\bibitem[CDLN14]{CDLN14}
Alejandro Cornejo, Anna~R. Dornhaus, Nancy~A. Lynch, and Radhika Nagpal.
\newblock Task allocation in ant colonies.
\newblock In {\em Distributed Computing - 28th International Symposium, {DISC}
  2014, Austin, TX, USA, October 12-15, 2014. Proceedings}, pages 46--60, 2014.

\bibitem[Coo04]{Cook}
Matthew Cook.
\newblock Universality in elementary cellular automata, 2004.

\bibitem[DFGR06]{DFGR06}
Carole Delporte{-}Gallet, Hugues Fauconnier, Rachid Guerraoui, and Eric
  Ruppert.
\newblock When birds die: Making population protocols fault-tolerant.
\newblock In {\em Distributed Computing in Sensor Systems, Second {IEEE}
  International Conference, {DCOSS} 2006, San Francisco, CA, USA, June 18-20,
  2006, Proceedings}, pages 51--66, 2006.

\bibitem[DH97]{DH97}
Shlomi Dolev and Ted Herman.
\newblock Superstabilizing protocols for dynamic distributed systems.
\newblock {\em Chicago J. Theor. Comput. Sci.}, 1997.

\bibitem[Dij74]{Dj74}
Edsger~W. Dijkstra.
\newblock Self-stabilizing systems in spite of distributed control.
\newblock {\em Commun. {ACM}}, pages 643--644, 1974.

\bibitem[DLECW92]{EMCW92gnrh}
G~Martinez De~La~Escalera, AL~Choi, and Richard~I Weiner.
\newblock Generation and synchronization of gonadotropin-releasing hormone
  (gnrh) pulses: intrinsic properties of the gt1-1 gnrh neuronal cell line.
\newblock {\em Proceedings of the National Academy of Sciences},
  89(5):1852--1855, 1992.

\bibitem[DMPTH10]{DMTH10bacteria}
Tal Danino, Octavio Mondrag{\'o}n-Palomino, Lev Tsimring, and Jeff Hasty.
\newblock A synchronized quorum of genetic clocks.
\newblock {\em Nature}, 463(7279):326, 2010.

\bibitem[Gar70]{GameofLife}
Martin Gardner.
\newblock The fantastic combinations of john conway's new solitaire game
  ``life'', 1970.

\bibitem[GMRL15]{GMRL15}
Mohsen Ghaffari, Cameron Musco, Tsvetomira Radeva, and Nancy~A. Lynch.
\newblock Distributed house-hunting in ant colonies.
\newblock In {\em Proceedings of the 2015 {ACM} Symposium on Principles of
  Distributed Computing, {PODC} 2015, Donostia-San Sebasti{\'{a}}n, Spain, July
  21 - 23, 2015}, pages 57--66, 2015.

\bibitem[GR17]{GR17}
Oded Goldreich and Dana Ron.
\newblock On learning and testing dynamic environments.
\newblock {\em Journal of the ACM (JACM)}, 64(3):21, 2017.

\bibitem[JMPT87]{JMT87cardiology}
HJ~Jongsma, M~Masson-Pevet, and L~Tsjernina.
\newblock The development of beat-rate synchronization of rat myocyte pairs in
  cell culture.
\newblock {\em Basic research in cardiology}, 82(5):454--464, 1987.

\bibitem[LBBC14a]{LBBC14b}
Giuseppe Antonio~Di Luna, Roberto Baldoni, Silvia Bonomi, and Ioannis
  Chatzigiannakis.
\newblock Conscious and unconscious counting on anonymous dynamic networks.
\newblock In {\em Distributed Computing and Networking - 15th International
  Conference, {ICDCN} 2014, Coimbatore, India, January 4-7, 2014. Proceedings},
  pages 257--271, 2014.

\bibitem[LBBC14b]{LBBC14}
Giuseppe Antonio~Di Luna, Roberto Baldoni, Silvia Bonomi, and Ioannis
  Chatzigiannakis.
\newblock Counting in anonymous dynamic networks under worst-case adversary.
\newblock In {\em {IEEE} 34th International Conference on Distributed Computing
  Systems, {ICDCS} 2014, Madrid, Spain, June 30 - July 3, 2014}, pages
  338--347, 2014.

\bibitem[MO09]{MO'S09molecularclock}
Joseph~S Markson and Erin~K O'Shea.
\newblock The molecular clockwork of a protein-based circadian oscillator.
\newblock {\em FEBS letters}, 583(24):3938--3947, 2009.

\bibitem[Mor78]{Mor78}
Robert Morris.
\newblock Counting large numbers of events in small registers.
\newblock In {\em ommunications of the ACM}, pages 840--842, 1978.

\bibitem[vN51]{Neu}
John von Neumann.
\newblock The general and logical theory of automata, 1951.

\end{thebibliography}

\end{document}